\theoremstyle{definition}
\newcommand{\bp}{\begin{proof} \small }
\newcommand{\ep}{\end{proof} \normalsize}
\newcommand{\epx}{\end{proof} \small}
\newcommand{\bpa}{\begin{proofappx} \footnotesize }
\newcommand{\epa}{\end{proofappx} \small }
\newtheorem{theorem}{Theorem}
\newtheorem{proposition}{Proposition}
\newtheorem{assumption}{Assumption}
\newtheorem*{theorem*}{Theorem}
\newtheorem*{proposition*}{Proposition}
\newtheorem*{corollary*}{Corollary}
\newtheorem*{lemma*}{Lemma}
\newtheorem*{assumption*}{Assumption}
\newtheorem*{definition*}{Definition}
\newtheorem*{claim*}{Claim}
\newcommand{\be}{\begin{equation}}
\newcommand{\ee}{\end{equation}}
\newcommand{\bs}{\begin{subequations}}
\newcommand{\es}{\end{subequations}}
\newcommand{\bq}{\begin{eqnarray}}
\newcommand{\eq}{\end{eqnarray}}
\newcommand{\bqn}{\begin{eqnarray*}}
\newcommand{\eqn}{\end{eqnarray*}}
\newcommand{\ba}{\left[ \begin{array}}
\newcommand{\ea}{\\ \end{array} \right]}
\newcommand{\ben}{\begin{enumerate}}
\newcommand{\een}{\end{enumerate}}
\def\real{{\mathchoice%
{\hbox{\rm\setbox1=\hbox{I}\copy1\kern-.45\wd1 R}}
{\hbox{\rm\setbox1=\hbox{I}\copy1\kern-.45\wd1 R}}
{\hbox{\scriptsize\rm\setbox1=\hbox{I}\copy1\kern-.45\wd1 R}}
{\hbox{\scriptsize\rm\setbox1=\hbox{I}\copy1\kern-.45\wd1 R}}}}
\def\Zint{{\mathchoice{\setbox1=\hbox{\sf Z}\copy1\kern-.75\wd1\box1}
{\setbox1=\hbox{\sf Z}\copy1\kern-.75\wd1\box1}
{\setbox1=\hbox{\scriptsize\sf Z}\copy1\kern-.75\wd1\box1}
{\setbox1=\hbox{\scriptsize\sf Z}\copy1\kern-.75\wd1\box1}}}
\newcommand{\complex}{ \hbox{\rm C\kern-0.45em\rule[.07em]{.02em}{.58em}%
\kern 0.43em}}
\newcommand{\algmargin}{\the\ALG@thistlm}
\newlength{\whilewidth}
\algnewcommand{\parState}[1]{\State%
	\parbox[t]{\dimexpr\linewidth-\algmargin}{\strut #1\strut}}
\def\BibTeX{{\rm B\kern-.05em{\sc i\kern-.025em b}\kern-.08em
		T\kern-.1667em\lower.7ex\hbox{E}\kern-.125emX}}
\begin{document}

\title{Adaptive User-Centric Entanglement Routing in Quantum Data Networks\\
}


\author{Lei Wang, Jieming Bian, 
    Jie Xu,~\IEEEmembership{Senior Member,~IEEE}
\thanks{Lei Wang, Jieming Bian and Jie Xu are with the Department of Electrical and Computer Engineering, University of Miami, Coral Gables,
FL 33146, USA. Email: \{lxw725, jxb1974, jiexu\}@miami.edu. 
}
}
\maketitle

\begin{abstract}
Distributed quantum computing (DQC) holds immense promise in harnessing the potential of quantum computing by interconnecting multiple small quantum computers (QCs) through a quantum data network (QDN). Establishing long-distance quantum entanglement between two QCs for quantum teleportation within the QDN is a critical aspect, and it involves entanglement routing — finding a route between QCs and efficiently allocating qubits along that route. Existing approaches have mainly focused on optimizing entanglement performance for current entanglement connection (EC) requests. However, they often overlook the user's perspective, wherein the user making EC requests operates under a budget constraint over an extended period. Furthermore, both QDN resources (quantum channels and qubits) and the EC requests, reflecting the DQC workload, vary over time. In this paper, we present a novel user-centric entanglement routing problem that spans an extended period to maximize the entanglement success rate while adhering to the user's budget constraint. To address this challenge, we leverage the Lyapunov drift-plus-penalty framework to decompose the long-term optimization problem into per-slot problems, allowing us to find solutions using only the current system information. Subsequently, we develop efficient algorithms based on continuous-relaxation and Gibbs-sampling techniques to solve the per-slot entanglement routing problem. Theoretical performance guarantees are provided for both the per-slot and long-term problems. Extensive simulations demonstrate that our algorithm significantly outperforms baseline approaches in terms of entanglement success rate and budget adherence.
\end{abstract}


\section{Introduction}
Quantum computing represents a groundbreaking paradigm with the potential to revolutionize computing and information processing\cite{steane1998quantum}. It offers unparalleled efficiency in solving specific problems that classical computers struggle with. However, in the foreseeable future, commodity Quantum Computers (QCs) will likely be constrained by a limited number of quantum bits, also known as qubits. Meanwhile, meaningful quantum computing applications may demand hundreds or even thousands of qubits\cite{cacciapuoti2019quantum}. To address this limitation, researchers have put forth the concept of Distributed Quantum Computing (DQC)\cite{cirac1999distributed, buhrman2003distributed, mao2023qubit}. This approach aims to distribute computational tasks among several smaller QCs, interconnected through a Quantum Data Network (QDN)\cite{yang2023asynchronous}. By leveraging this distributed setup, it becomes feasible to tackle more complex problems that demand a larger quantum resource while making the most of the available quantum computing capabilities. Different from quantum key distribution (QKD), which focuses on creating quantum bits (qubits) for cryptographic key delivery and has the capability to regenerate and retransmit qubits if needed \cite{mehic2020quantum}, QDNs encode data information in the data qubits and employ quantum teleportation \cite{bouwmeester1997experimental} techniques to address the retransmission challenge  due to the no-cloning theorem, which relies heavily on the stable long-distance entanglement \cite{wootters1982single}.

Quantum entanglement is a fundamental building block that plays a pivotal role in various applications\cite{vedral2014quantum}. To create quantum entanglement between two QCs, such as Alice and Bob, entangled Bell pairs of photons, known as qubits, are generated at one side and one of the entangled qubits is transmitted to the other side through a physical fiber-optic channel. However, the process faces challenges due to the inherent lossiness of the optical fiber, resulting in a success rate much less than one, which is influenced by the physical distance between Alice and Bob and the material properties of the fiber-optic channel \cite{stephenson2020high}. Establishing an entanglement link is probabilistic and unstable, relying on the availability of quantum channels between the parties and qubits on both ends. To enhance the probability of successful entanglement, more qubits can be allocated on both ends, and additional channels can be utilized if available, effectively increasing the number of attempts made simultaneously. Nevertheless, quantum channels are limited, and each quantum node has restricted quantum memory to store qubits, adding further complexity to the process. 

\begin{figure}
    \centering
    \includegraphics[width=0.35\textwidth]{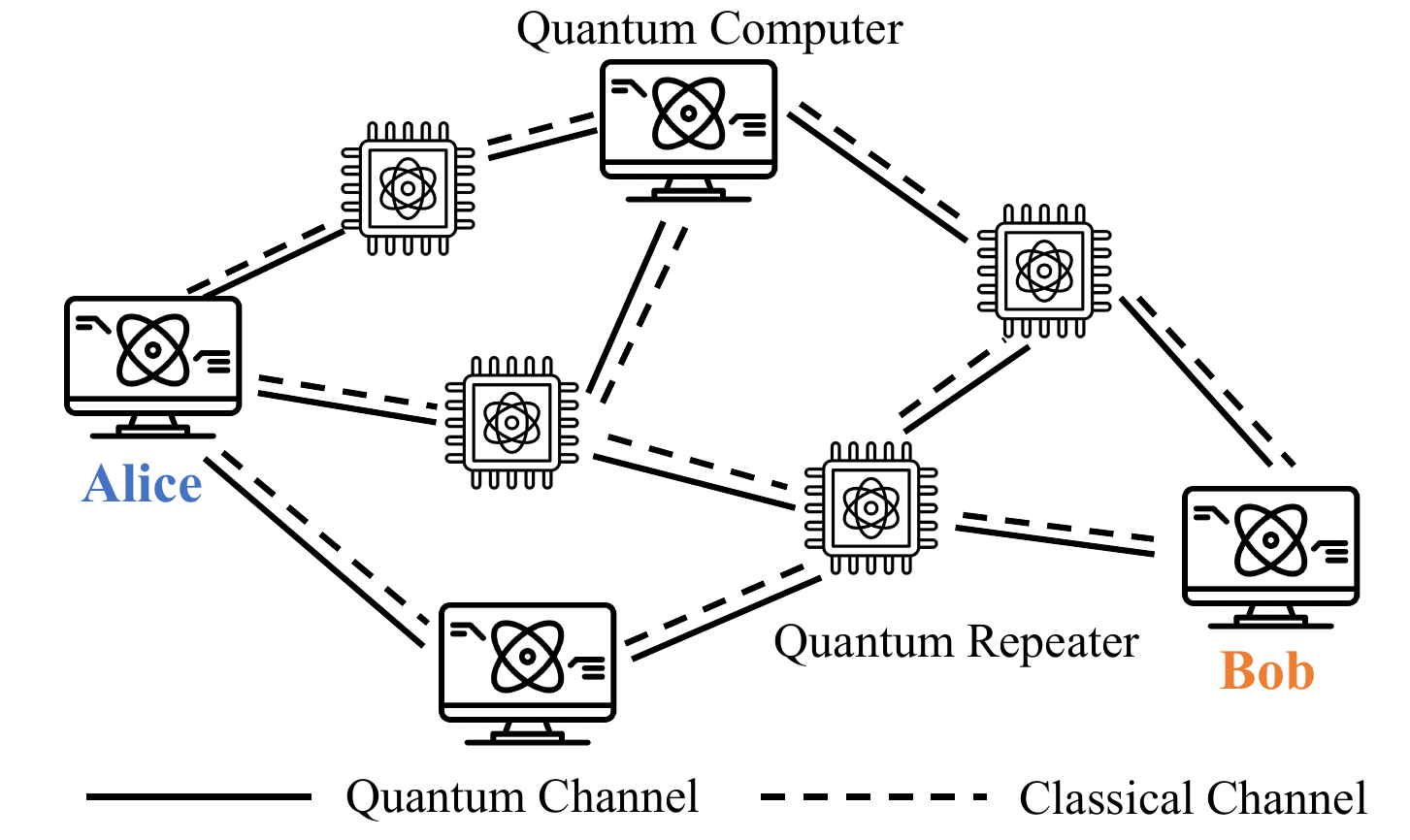}
     \vspace{-10pt}
    \caption{Quantum Data Network.}
    \label{fig:QDN}
    \vspace{-15pt}
\end{figure}

In QDNs, as illustrated in Figure~\ref{fig:QDN}, QCs are often geographically distant from each other and lack direct channel connections. Instead, they are interconnected through several other QCs or quantum repeaters (QRs), enabling the establishment of entanglement links only between adjacent quantum nodes. Fortunately, a process known as quantum swapping can be employed at intermediary quantum nodes when both Alice and Bob are connected to the same node\cite{jennewein2001experimental}. This process allows the creation of entanglement between Alice and Bob. By repeating this operation for multiple quantum links along a route consisting of nodes between Alice and Bob, a long-distance entanglement connection (EC) can be established. Though this operation may fail, we can still assume its probability approaches 1 due to the most advanced research and related work \cite{yang2023asynchronous, hilaire2021error}. This approach enables effective entanglement distribution over considerable distances in QDN. Optimizing the performance of QDN requires meticulous entanglement routing, consisting both route selection and qubit allocation, especially in light of the constraints posed by limited quantum channels and qubit availability. By strategically managing these valuable resources, QDNs can achieve enhanced efficiency, robustness, and overall effectiveness in their operations.

\subsection{Related Work}
Several researches have been dedicated to addressing the entanglement routing problem, which involves establishing long-distance ECs for various source-destination (SD) pairs, while considering route selection and resource allocation. Early works focused on specialized network topologies, including sphere \cite{schoute2016shortcuts}, grid \cite{pant2019routing}, ring \cite{chakraborty2019distributed}, and star \cite{vardoyan2019stochastic} configurations. These studies explored specific network structures to understand and optimize entanglement routing. In more recent studies, researchers have considered a general QDN setting, where ECs are requested by multiple SD pairs \cite{shi2020concurrent}. The primary objective in these studies is to maximize the expected network throughput, although performance guarantees are not guaranteed. A further extension of the research \cite{zeng2022multi} has sought to maximize both throughput and the number of SD pairs served by the QDN, accompanied by theoretical analyses. This work addresses the challenge of efficiently utilizing network resources to support multiple SD pairs concurrently. To enhance failure tolerance, some approaches \cite{zhao2021redundant, zhao2022segmented, zhao2022e2e} have leveraged redundant entanglement links in routing. This redundancy ensures that even if certain links fail, alternative routes can be utilized, increasing the robustness of the QDN. Several studies incorporate the consideration of entanglement connection, a crucial metric for assessing the quality of remote entanglement links, into the framework of entanglement routing \cite{caleffi2017optimal, zhao2022e2e, li2022fidelity}. A qubit allocation algorithm in QDNs was proposed by combining simulated-annealing and local search \cite{mao2023qubit}. The entanglement routing scenario in QDNs was extended beyond the time slot mode to an asynchronous scheme in \cite{yang2023asynchronous} and an online mode that processes requests upon arrival, termed as online entanglement routing \cite{yang2022online}, both resulting in proactively utilize idle quantum resources more effectively. Additionally, opportunism has been introduced to QDNs \cite{farahbakhsh2022opportunistic}, allowing for the opportunistic establishment of quantum links and enhancing routing flexibility.

\subsection{Our Contribution}
However, the majority of existing works on entanglement routing primarily focus on myopically optimizing entanglement performance in the current time slot (more precisely, the current EC requests). A critical aspect is often overlooked: the fact that ECs are user-requested, and the entire QDN is established and maintained by the service provider. Consequently, allocating qubits to generate entanglement links for EC requests during a time slot incurs a cost charged by the QDN provider. The allocated qubits are valuable resources that cannot be utilized by other users during that specific time slot, making efficient resource allocation essential. Because users typically operate within a budget to avoid excessive expenses over an extended period, spanning multiple time slots, strategically spending this budget in the long-run is crucial for the overall entanglement performance. On the other hand, the entanglement routing problem is significantly complicated by the fact that EC requests may follow a random process unknown to the user beforehand, which may depend on the DQC workload arrival process. As a result, entanglement routing decisions are inherently coupled across many time slots. Therefore, it becomes imperative to develop entanglement routing strategies that take into account the users' specific requests and budget limitations in the long term while also ensuring the efficient utilization of QDN resources.

In this paper, we propose a novel user-centric entanglement routing approach that balances between maximizing the entanglement connection success rates and optimizing the overall user cost over time. Our main contributions are as follows:
\begin{itemize}
    \item We formulate a user-centric entanglement routing problem that jointly considers route selection and qubit allocation over an extended period to maximize the entanglement success rate, while respecting the user's budget constraint.
    \item To solve this problem efficiently, we leverage the Lyapunov drift-plus-penalty framework, decomposing the long-term optimization into per-slot problems. These per-slot problems can be solved using only current system information, without requiring knowledge of future EC request, quantum channel, and qubit availability. We further design an efficient algorithm to address the challenging integer program of per-slot entanglement routing.
    \item We provide theoretical performance guarantees for our algorithms, for both the per-slot and the long-term problems.
    \item Through extensive simulations, we demonstrate the efficiency and superiority of our approach, showcasing its significant outperformance compared to baseline methods.
\end{itemize}

\section{Background} \label{s2}
Before delving into our system model and presenting the entanglement routing problem, we provide some essential background information on QDN as a foundation.

\subsubsection{Qubit}
A qubit, or quantum bit, serves as the fundamental building block of quantum information. Unlike classical binary bits confined to 0 or 1 states, a qubit can exist in a coherent superposition of these states, represented as $\alpha|0\rangle + \beta|1\rangle$ \cite{van2014quantum}. There are various physical implementations of qubits, but for quantum communication transmitted through optical fiber-based quantum channels, photonic qubits emerge as the most promising candidates \cite{mattle1996dense}.

\subsubsection{Quantum Entanglement}
Quantum entanglement is a remarkable state that involves multiple qubits and cannot be simply represented by the product of their individual states. The measurement outcomes of entangled qubits are correlated, leading to intriguing quantum phenomena. In a two-qubit system, a pair of entangled qubits, denoted as A and B, form what are known as Bell pairs. These Bell pairs can exist in one of two standard bases: $\frac{|0_A 0_B\rangle \pm |1_A 1_B\rangle}{\sqrt{2}}$ or $\frac{|0_A 1_B\rangle \pm |1_A 0_B\rangle}{\sqrt{2}}$ \cite{nielsen2001quantum}.

\begin{figure}
    \centering
    \includegraphics[width=0.35\textwidth]{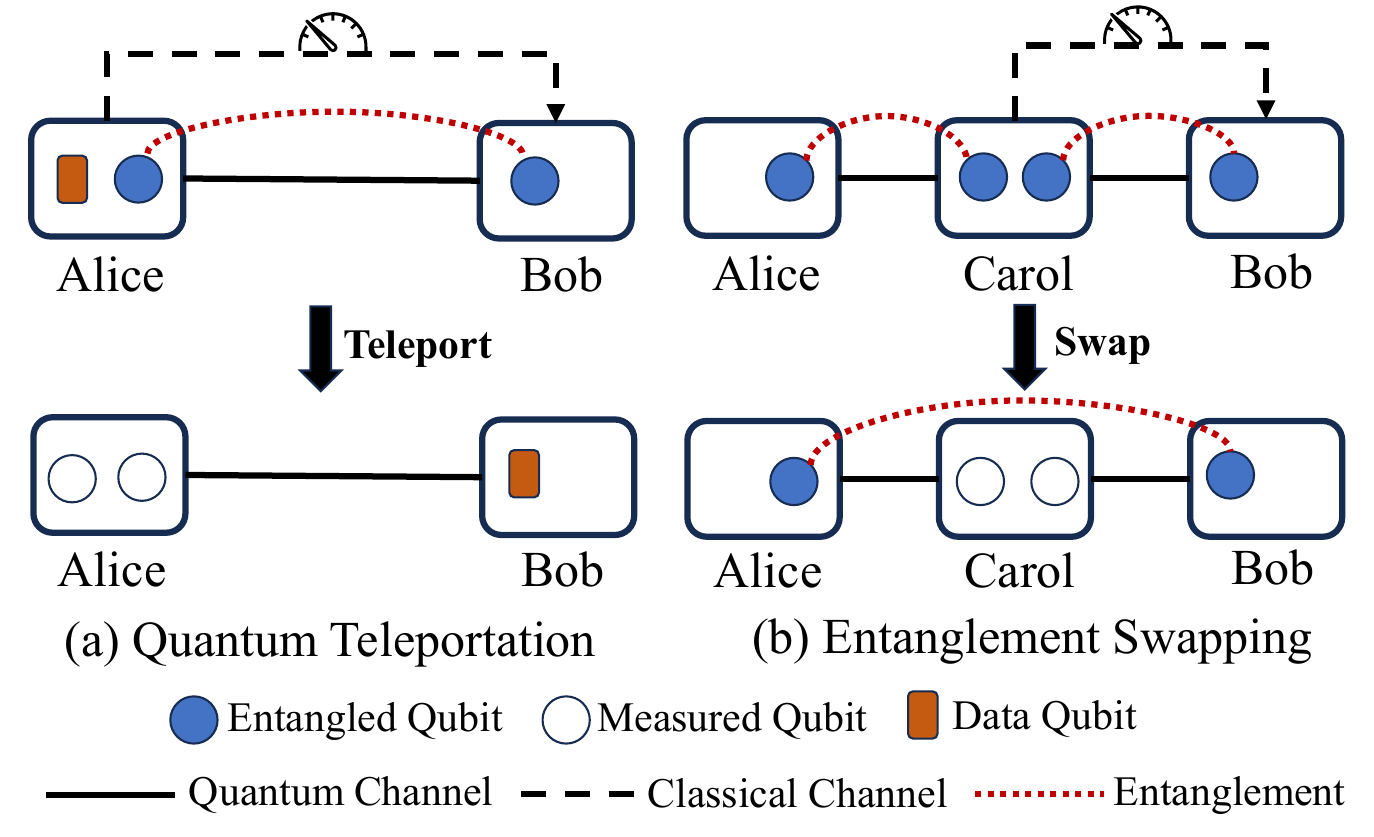}
    \vspace{-13pt}
    \caption{Quantum Teleportation and Entanglement Swapping.}
    \label{fig:quantum_opreation}
    \vspace{-10pt}
\end{figure}
\subsubsection{Quantum Teleportation}
Quantum teleportation is a significant application of quantum entanglement and the enabler of DQC. As illustrated in Figure~\ref{fig:quantum_opreation}, when Alice and Bob share an entangled pair of qubits (referred to as ebit), Alice can teleport the state of another qubit with data information (known as dbit) to Bob \cite{bouwmeester1997experimental}. This process involves Alice performing a joint measurement (Bell State Measurement) of her dbit and the shared ebit and then communicating the measurement result to Bob through a classical channel. Upon receiving the measurement result, Bob applies certain unitary operations on his own ebit \cite{pirandola2015advances}. As a result, Bob's ebit obtains the state of the original dbit, while Alice's dbit collapses, and the entanglement between the two ebits is destroyed. This phenomenon allows for the transfer of quantum information without the physical movement of particles. 

\subsubsection{Entanglement Swapping}
Entanglement swapping is a crucial technique for creating long-distance entanglement. The process of entanglement swapping is depicted in Figure~\ref{fig:quantum_opreation}. Here is an example scenario: Alice shares an entangled pair of qubits with a third party, Carol, while Bob also shares a qubit pair with Carol. By performing the swapping operation on her qubits, Carol can teleport the state of her qubit, initially entangled with Alice, to Bob. Consequently, Alice and Bob effectively share an entangled pair of qubits, even though they are not directly connected to each other. This enables the establishment of long-distance entanglement between distant parties \cite{jennewein2001experimental}. For the purpose of this work, we assume a successful swapping operation, as recent advancements have significantly increased its success rate to approximately one \cite{hilaire2021error}, which is also a reasonable setting in the state-of-the-art work \cite{yang2023asynchronous}. Moreover, the failure probability of swapping can also be considered as part of the overall failure probability of establishing entanglement connections, just incorporating a product term in Equation~\ref{eq:rate}.

\subsubsection{Quantum Data Network}
In the QDN, quantum nodes represent nodes that can be either QCs or QRs. These quantum nodes are interconnected through quantum channels, forming edges in the QDN graph. All quantum nodes possess the capability to perform entanglement swapping and establish entanglement with other nodes, but have a limited qubit capacity due to quantum memory constraints. Additionally, quantum channels are inherently lossy, and the success rate of a single attempt to create entanglement can be as low as $2.18 \times 10^{-4}$ \cite{stephenson2020high}.
To establish long-distance entanglement, a route must be determined from the source to the destination. Subsequently, a sequence of quantum links along this route is generated, and swapping operations are performed at nodes along the route. The typical time for entanglement to decohere is approximately $1.46$ seconds \cite{dahlberg2019link}, while the time required for a single entanglement attempt is around $165\mu$s \cite{dahlberg2019link}. In a time slot, defined as the entanglement duration, thousands of attempts can be made for a single quantum link. Efficient utilization of qubits and quantum channels is critical for the success of long-distance entanglement establishment in the QDN.

\section{System Model} \label{s3}
\subsection{Quantum Data Network}
We consider a QDN represented by an undirected graph $\mathcal{G} = \langle\mathcal{V}, \mathcal{E}\rangle$, where $\mathcal{V}$ is the set of quantum nodes and $\mathcal{E}$ is the set of edges. Each edge $e = (u,v) \in \mathcal{E}$ connects two nodes $v$ and $u$. Each quantum node $v \in \mathcal{V}$ is equipped with a limited number of $Q_v$ qubits. However, the available qubits $Q^t_v$ can change over time, denoted by $t$, as some qubits may be occupied by other users. This occupancy is considered as an exogenous process. In order for two quantum nodes $v, u \in \mathcal{V}$ to be connected by an edge $e \in \mathcal{E}$, there must be at least one quantum channel (i.e., physical fiber-optical wire) between them. Let $W^t_e$ represent the number of available quantum channels on edge $e$, which can also vary over time depending on the usage by other users.

\subsection{Quantum Entanglement Link}
In the QDN, a quantum link can be established on an edge $e = (v, u)$ using one qubit on node $v$, one qubit on node $u$, and a quantum channel on $e$. However, successful entanglement establishment is not guaranteed on every quantum channel. Let $\tilde{p}_e$ denote the success probability of establishing entanglement on one quantum channel between nodes $v$ and $u$ during a single attempt. This probability depends on both the physical properties of the channel material and the length of the quantum channel. Typically, $\tilde{p}_e$ is low. To increase the probability of successful entanglement, nodes $v$ and $u$ can utilize multiple quantum channels and make multiple attempts on each channel within a given time slot. Assuming that the outcomes of these attempts are independent, the success probability on a single channel after $A$ attempts is given by $p_e = 1 - (1-\tilde{p}_e)^A$.

Now, if $n_e$ channels are used for establishing a quantum link between $v$ and $u$, the overall success probability is given by:
\begin{align}
P_e(n_e) = 1 - (1 - p_e)^{n_e},
\end{align}
where $P_e(n_e)$ represents the probability of successfully establishing a quantum link using $n_e$ channels.

\subsection{Problem Formulation}
We are primarily focused on addressing the entanglement routing for a single quantum user within a QDN. This user's objective is to establish ECs for between source-destination (SD) pairs over a specified duration of time slots, denoted as $T$. In each time slot $t$, the user aims to find routes among the QDN for a particular set of SD pairs, denoted as $\Phi^t$, which may depend on the DQC requirements. We assume an upper bound $F$ on the size of $\Phi^t, \forall t$. It's important to note that the set of SD pairs may vary over time.

In our scenario, each SD pair, denoted as $\phi \in \Phi^t$, consists of a source node, represented by $s(\phi)$, and a destination node, represented by $d(\phi)$. Additionally, there exists a set of potential routes, denoted as $\mathcal{R}(\phi)$, associated with each SD pair $\phi$. We assume an upper bound $R$ on the size of this set. 
It's important to highlight that the candidate set $\mathcal{R}(\Phi)$ can be pre-computed by choosing routes with shorter lengths/hops to minimize its size. Alternatively, any established shortest path finding algorithm, such as Dijkstra's Algorithm, can be employed for this purpose \cite{dijkstra2022note}. A route $r \in \mathcal{R}(\phi)$ is defined as a subset of graph edges, denoted as $\mathcal{E}$, that form a connected route between the source node $s(\phi)$ and the destination node $d(\phi)$. We assume an upper bound $L$ on the length of a route $r$. To establish EC between the source and destination, qubits need to be allocated among the nodes along the selected route.

Given a specific qubit allocation $\mathcal{N}(r) = \{n_e(r), \forall e \in r\}$ for a route $r$, the entanglement success rate can be calculated as the product of the success probabilities of the individual edges on that route as follows:
\begin{align}
\label{eq:rate}
P(r, \mathcal{N}(r)) = \prod_{e \in r} P_e(n_e(r)).
\end{align}
Here, $P_e(n_e(r))$ denotes the success probability of edge $e$ when $n_e(r)$ qubits are allocated on $e$. It is worth noting that although we assume each SD pair makes a single EC request between the source and destination, the extension to multiple EC requests from a single SD pair is straightforward. In such cases, we can treat each entanglement connection request as a separate SD pair, each with a single EC request.
 
It is important to note that the success rate of entanglement swapping can also be viewed as a product term in Equation~\ref{eq:rate} for establishing entanglement connections. For simplicity, we choose to disregard the impact of entanglement swapping, given recent advancements that have substantially elevated its success rate to approximately one \cite{hilaire2021error}. This assumption aligns with the current state-of-the-art work as well \cite{yang2023asynchronous}.

\textbf{Objective}: The quantum user's objective is to optimize the entanglement success rate over $T$ time slots ($t = 0, 1, ..., T-1$) by selecting a route for each SD pair and allocating qubits along those routes. In order to ensure fairness among the SD pairs and distribute quantum network resources appropriately, we adopt the concept of proportional fairness \cite{kelly1997charging}. The goal is to maximize the following objective function:
\begin{align}
\sum_{t=0}^{T-1}\sum_{\phi\in \Phi^t}\log P(r^t(\phi), \mathcal{N}^t(r^t(\phi))).
\end{align}
Here, $r^t(\phi)$ represents the chosen route for SD pair $\phi$ in time slot $t$, and $\mathcal{N}^t(r^t(\phi))$ denotes the qubit allocation along this route. The objective function sums over all time slots and SD pairs, and the logarithm of the entanglement success rate is used to capture proportional fairness.

\textbf{Capacity Constraints}: In order to solve the aforementioned maximization problem, it is necessary to take into account the capacity constraints of the quantum network. These constraints arise from varying qubit capacity $Q^t_v$ of each node $v \in \mathcal{V}$ and varying quantum channel capacity $W^t_e$ of each edge $e \in \mathcal{E}$.

The qubit capacity constraint can be expressed as follows:
\begin{align}
\sum_{\phi \in \Phi^t} \sum_{e \in r^t(\phi)} \mathds{1}_e(v) n^t_e(r^t(\phi)) \leq Q^t_v, \quad \forall v \in \mathcal{V}, \quad \forall t.
\label{qubit_cap}
\end{align}
This equation ensures that the total number of qubits allocated to node $v$ from the selected routes $r^t(\phi)$ of all SD pairs $\phi$ in time slot $t$ does not exceed the qubit capacity $Q^t_v$ of that node. Similarly, the quantum channel capacity constraint can be formulated as:
\begin{align}
\sum_{\phi \in \Phi^t} \mathds{1}_{r^t(\phi)}(e) n^t_e(r^t(\phi)) \leq W^t_e, \quad \forall e \in \mathcal{E}, \quad \forall t.
\label{channel_cap}
\end{align}
This inequality ensures that the total number of qubits allocated to edge $e$ from the selected routes $r^t(\phi)$ of all SD pairs $\phi$ in time slot $t$ does not exceed the quantum channel capacity $W^t_e$ of that edge. In both equations, the indicator function $\mathds{1}_\mathcal{X}(x)$ is utilized, where $\mathds{1}_\mathcal{X}(x) = 1$ if $x \in \mathcal{X}$ and $\mathds{1}_\mathcal{X}(x) = 0$ otherwise.

\textbf{Budget Constraint}: In addition to capacity constraints, the usage of the quantum network incurs costs for the quantum user. These costs are associated with the utilization of qubits and the establishment of quantum links along the routes. Assuming that the cost is proportional to the number of qubits and quantum channels used, the cost in time slot $t$ can be expressed as $\sum_{\phi \in \Phi^t}\sum_{e \in r^t(\phi)} n^t_e(r^t(\phi))$.

To incorporate the cost aspect, we consider a total budget $C$ that represents the user's allowance for using the quantum network over a period of $T$ time slots. The budget constraint is imposed as follows:
\begin{align}
\sum_{t=0}^{T-1} \sum_{\phi \in \Phi^t}\sum_{e \in r^t(\phi)} n^t_e(r^t(\phi)) \leq C.
\label{budget_cap}
\end{align}
This constraint ensures that the total cost, which is the sum of the number of qubits and quantum channels used across all time slots and SD pairs, does not exceed the specified budget.

\textbf{Fidelity Constraint}: Our work focuses primarily on user-centric entanglement routing, taking into account a long-term limited budget thus fidelity is a secondary additional requirement in our scenario. Previous research typically utilizes entanglement fidelity to assess the quality of EC \cite{zhao2022e2e, li2022fidelity}. In fact, we can easily integrate a constraint into \textbf{P1}, which calculates the fidelity of the chosen route and ensures it remains below the fidelity target in each time slot. This constraint is analogous to aforementioned capacity constraints. Since it is a per-slot constraint independent of long-term knowledge, which is different from the budget constraint, we can easily modify the per-slot problem \textbf{P2} by incorporating this constraint while still utilizing our proposed algorithm outlined in Section~\ref{s4}. Without affecting our key idea, we do not consider fidelity for simplicity, which is a reasonable formulation in many recent studies \cite{yang2023asynchronous,zhao2022segmented,yang2022online,zeng2022multi}. 


In summary, the quantum routing problem can be formulated as follows:
\label{p1}
\begin{align}
\textbf{P1:}\quad \max & \quad\sum_{t=0}^{T-1}\sum_{\phi\in \Phi^t}\log P(r^t(\phi), \mathcal{N}^t(r^t(\phi))) \nonumber\\
\text{s.t.} & \quad \text{Capacity constraints:} \quad \eqref{qubit_cap}, \eqref{channel_cap} \nonumber\\
& \quad \text{Budget constraint:}  \quad \eqref{budget_cap} \nonumber\\
&\quad r^t(\phi) \in \mathcal{R}(\phi), \forall \phi \in \Phi^t, \forall t \nonumber\\
&\quad n_e(r^t(\phi)) \in \mathbb{Z}_{++}, \forall e \in r^t(\phi), \forall \phi \in \Phi^t, \forall t \nonumber,
\end{align}
where $\mathbb{Z}_{++} \triangleq \{1, 2, ...\}$ are positive integers  to ensure connectivity. It is important to note that the capacity constraints $\eqref{qubit_cap}$ and $\eqref{channel_cap}$ are short-term constraints that must be satisfied in each time slot $t$, while the budget constraint $\eqref{budget_cap}$ is a long-term constraint that considers the cumulative usage over $T$ time slots. We also denote $\sum_{\phi \in \Phi^t} \log P(r^t(\phi), \mathcal{N}^t(r^t(\phi))) = u(\boldsymbol{r}^t, \mathcal{N}^t)$ for simplicity. 

\textbf{Challenges}: There are two primary challenges associated with directly solving the problem \textbf{P1}. \textit{Firstly}, the decisions regarding quantum route selection and qubit allocation have a correlated impact across different time slots, affecting both the objective function and the constraints. Allocating a larger qubit budget in the current time slot can enhance the success rate of entanglement, but it may potentially degrade performance in future time slots. However, since we lack prior knowledge of future EC requests, qubit availability, and channel capacity, we require an online algorithm that can make decisions without relying on such information. \textit{Secondly}, within each time slot $t$, there is a sequential order that determines the chosen quantum route and subsequently the allocation of qubits along that route. The problem itself is complex due to the discrete nature of the decision variables and the vast decision space involved. Hence, it is necessary to employ a low-complexity algorithm capable of efficiently performing route selection and qubit allocation within a time slot.

\section{Online User-Centric Entanglement Routing}\label{s4}

To address the aforementioned challenges, we propose an online algorithm, called \underline{O}nline u\underline{S}er-\underline{C}entric ent\underline{A}nglement \underline{R}outing (OSCAR), that breaks down the long-term problem of route selection and qubit allocation into per-slot problems. Subsequently, we have developed an efficient algorithm to solve each per-slot problem individually.

\subsection{Long-Term Problem Decomposition}
Our approach is based on the Lyapunov drift-plus-penalty framework, which utilizes a virtual cost deficit queue $q^t$ to guide the decisions of route selection and qubit allocation in each time slot, ensuring adherence to the long-term budget constraint. For simplicity, we denote $c^t = \sum_{\phi \in \Phi^t}\sum_{e \in r^t(\phi)} n^t_e(r^t(\phi))$ as the cost incurred in time slot $t$. The virtual queue $q^t$ evolves according to the following recursion:
\begin{align}
\label{queue}
q^{t+1} = \max\{0, q^t + c^t - C/T\}.
\end{align}
Intuitively, the virtual queue captures the accumulated violation of the budget constraint. Thus, our objective is to maximize the entanglement success rate while minimizing the length of the virtual queue. We define a constant $V > 0$ that will be discussed further in Section \ref{sec:ablation}.

In each time slot, we formulate the following drift-plus-penalty maximization problem, denoted as \textbf{P2}:
\begin{align}
\textbf{P2:} \quad \max & \quad V\cdot\sum_{\phi\in \Phi^t}\log P(r^t(\phi), \mathcal{N}^t(r^t(\phi))) \nonumber\\
&- q^t\cdot \sum_{\phi \in \Phi^t}\sum_{e \in r^t(\phi)} n^t_e(r^t(\phi)) \nonumber\\
\text{s.t.} & \quad \text{Capacity constraints:} \quad \eqref{qubit_cap}, \eqref{channel_cap} \nonumber\\
&\quad r^t(\phi) \in \mathcal{R}(\phi), \forall \phi \in \Phi^t, \forall t \nonumber\\
&\quad n_e(r^t(\phi)) \in \mathbb{Z}_{++}, \forall e \in r^t(\phi), \forall \phi \in \Phi^t, \forall t \nonumber.
\end{align}
For simplicity, we define $f(\boldsymbol{r}^t(\Phi^t), \mathcal{N}^t(\boldsymbol{r}^t(\Phi^t)))$ as the objective function. It is important to note that while solving \textbf{P2}, we consider the virtual queue length $q^t$, as well as the qubit capacity $Q^t_v$ for all $v\in\mathcal{V}$ and the channel capacity $W^t_e$ for all $e \in \mathcal{E}$, as given variables. Thus, the problem \textbf{P2} is a per-slot problem that solely relies on the available current information, without requiring future statistics of EC requests and qubit/channel capacity. By augmenting the original objective function (entanglement success rate) with the cost term weighted by the virtual queue, we dynamically balance the maximization of performance and the minimization of cost. The algorithm is summarized in Algorithm~\ref{alg:long_term} and its performance will be further analyzed in Section \ref{s5}.

\begin{algorithm}[t]
	\caption{OSCAR}
	\begin{algorithmic}[1]
		\State \textbf{Input}: $q^0$ and $V$
        \State \textbf{Output}: $r^t(\phi), \mathcal{N}^t(r^t(\phi)), \quad  \forall t \in [0, \dots, T]$
        \For {$t = 1, 2, ..., T$}
            \State Observe $\Phi^t$, $Q^t_v, \forall v \in \mathcal{V}$ and $W^t_e, \forall e \in \mathcal{E}$
            \State Solve \textbf{P2} (See Section \ref{sec:per-slot})
            \State Update virtual queue $q^t$ according to \eqref{queue}
        \EndFor
	\end{algorithmic}\label{alg:long_term}
\end{algorithm}

\subsection{Solving the Per-Slot Problem}
\label{sec:per-slot}
While \textbf{P2} does not rely on future information, it remains a challenging problem due to its large decision space as an integer program. To address this challenge, we initially focus on solving the qubit allocation problem with a fixed route selection for the SD pairs in $\Phi^t$. By obtaining the ``optimal'' qubit allocation for each route selection, we can subsequently determine the ``optimal'' quantum routes. For the sake of simplicity, we will omit the time index $t$ in this subsection.
\begin{algorithm}[t]
	\caption{Qubit Allocation}
	\begin{algorithmic}[1]
		\State \textbf{Input}: Route selection $\boldsymbol{r}(\Phi) = \{r(\phi): \phi \in \Phi\}$
        \State \textbf{Output}: Rounded solution  $\mathcal{N}^*$ 
        \State Obtain the solution $\tilde{\mathcal{N}}^*$ by addressing \textbf{P2}, wherein the integer restrictions $n_e \in \mathbb{Z}_{++}$ are relaxed to $n_e \geq 1$.
        \State ``Down-round'' and allocate surplus
        
	\end{algorithmic}\label{alg:qubit_allocation}
\end{algorithm}

\subsubsection{\textbf{Qubit Allocation}}
Our approach to tackle the qubit allocation problem involves utilizing continuous relaxation. Specifically in Algorithm~\ref{alg:qubit_allocation}, for a given route selection $\boldsymbol{r}(\Phi) = \{r(\phi): \phi \in \Phi\}$, we solve \textbf{P2} (with a fixed $\boldsymbol{r}(\Phi)$) by relaxing the integer constraints $n_e \in \mathbb{Z}_{++}$ to $n_e \geq 1$. Let $\tilde{\mathcal{N}}^* = \{\tilde{n}^*_e \in \mathbb{R}: \forall e \in r(\phi), \forall \phi\in \Phi\}$ be the optimal solution obtained from solving the relaxed problem. We then apply rounding to each $\tilde{n}^*_e$ to obtain an integer solution, denoted as $n^*_e$. To ensure the satisfaction of capacity constraints, we initially perform a ``down-rounding'' operation on $\tilde{n}^*_e$ for every edge $e$, and then allocate any surplus to the edges on the node or edge if possible. This rounding strategy guarantees that the rounded solution $\mathcal{N}^*$ adheres to the capacity constraints, and for all $e \in r(\phi)$, $\phi\in \Phi$, we have:
\begin{align}
n^*_e \geq 1 \quad \text{and} \quad \tilde{n}^*_e - n^*_e \leq 1.
\label{round_diff}
\end{align}

Next, we analyze the sub-optimality gap using the continuous relaxation approach. We begin by demonstrating that the continuous-relaxed problem \textbf{P2} (with a fixed $\boldsymbol{r}(\Phi)$) is a convex optimization problem.

\begin{proposition}
The continuous-relaxed problem \textbf{P2} with a fixed $\boldsymbol{r}(\Phi)$ is a convex optimization problem.
\end{proposition}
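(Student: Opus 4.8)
The plan is to verify the two defining ingredients of a convex program separately: that the feasible set is convex and that the (maximization) objective is concave. Once the route selection $\boldsymbol{r}(\Phi)$ is fixed and the integrality requirement $n_e \in \mathbb{Z}_{++}$ is relaxed to $n_e \geq 1$, the only remaining decision variables are the continuous channel counts $\{n_e\}$. The feasible region is cut out by the qubit capacity constraint \eqref{qubit_cap}, the channel capacity constraint \eqref{channel_cap}, and the relaxed lower bounds $n_e \geq 1$. Each of these is a linear inequality in the $n_e$ (the indicator functions only select which variables appear, while $q^t$, $Q^t_v$ and $W^t_e$ are fixed constants in the per-slot problem), so the feasible set is a polyhedron and therefore convex.

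For the objective, I would split it into its two terms. The penalty term $-q^t\sum_{\phi\in\Phi}\sum_{e\in r(\phi)} n_e$ is linear in $\{n_e\}$, hence concave. It then remains to show that the success-rate term $V\sum_{\phi\in\Phi}\sum_{e\in r(\phi)}\log P_e(n_e)$ is concave. Since this is a separable sum over the individual edge variables and $V>0$, it suffices to establish that each single-variable function $g(n)\triangleq\log P_e(n)=\log\!\bigl(1-(1-p_e)^{n}\bigr)$ is concave for $n\geq 1$.

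The key step is the sign of $g''$. Writing $a=1-p_e\in(0,1)$ and $b=-\ln a>0$ so that $(1-p_e)^{n}=e^{-bn}$, I would first compute $g'(n)=b\,e^{-bn}/(1-e^{-bn})$, which is positive (consistent with a higher success rate when more channels are used), and then differentiate once more to obtain $g''(n)=-b^2 e^{-bn}/(1-e^{-bn})^2$. Because $b>0$ and both numerator and denominator are strictly positive on the feasible domain $n\geq 1$, we get $g''(n)<0$, so $g$ is strictly concave. Summing concave functions over all edges and all SD pairs preserves concavity, and scaling by $V>0$ does as well, so the full objective $f$ is concave.

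The conclusion then follows immediately: maximizing a concave function over a convex (polyhedral) feasible set is, by definition, a convex optimization problem. I expect the only real obstacle to be the concavity of $g$; the differentiation is routine, but one must invoke the standing assumption $0<p_e<1$ (equivalently $a\in(0,1)$, $b>0$) to guarantee both $1-e^{-bn}>0$ throughout the domain and the negative sign of $g''$.
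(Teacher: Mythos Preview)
Your proof is correct and follows the same overall structure as the paper's: the linear constraints give a convex (polyhedral) feasible set, and the objective decomposes into a linear penalty term plus a separable sum of the functions $\log P_e(n_e)$, whose concavity is the only nontrivial step. The paper handles that step slightly more tersely by noting that $P_e(n_e)=1-(1-p_e)^{n_e}$ is concave (since $1-p_e\in(0,1)$) and invoking the composition rule that $\log$, being concave and nondecreasing, preserves concavity, whereas you reach the same conclusion by an explicit second-derivative computation.
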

\begin{proof}
Since the constraints in the problem are linear, it suffices to demonstrate that the objective function is concave. For a given route selection $\boldsymbol{r}(\Phi) = \{r(\phi): \phi \in \Phi\}$, the objective function in \textbf{P2} can be expressed as:
\begin{align}
&V\cdot \sum_{\phi \in \Phi}\sum_{e \in r(\phi)} \log P_e(n_e(r(\phi)) - q \cdot \sum_{\phi \in \Phi} \sum_{e \in r(\phi)} n_e(r(\phi))\nonumber \nonumber\\
=&\sum_{\phi\in \Phi}\sum_{e\in r(\phi)}\left( V\cdot \log P_e(n_e(r(\phi)) - q\cdot n_e(r(\phi))\right) \nonumber.
\end{align}
Thus, it suffices to prove that $\log P_e(n_e)$ is concave. Since $1-p_e \in (0, 1)$, $P_e(n_e)$ is concave. According to the composition rule \cite{boyd2004convex}, $\log P_e(n_e)$ is also concave. This concludes the proof.
\end{proof}

Let $f(\mathcal{N})$ denote the objective function of \textbf{P2} when the route selection is fixed, and $\mathcal{N}^\text{opt}$ be the optimal integer solution. 

\begin{proposition}
\label{prop1}
The sub-optimality gap between $\mathcal{N}^\text{opt}$ and $\mathcal{N}^*$ is upper-bounded by $f(\mathcal{N}^\text{opt}) - f(\mathcal{N}^*) \leq VFL\log(2 - p^\text{min}) \triangleq \Delta$, where $p^\text{min} = \min_{e\in\mathcal{E}} p^e$. 
\end{proposition}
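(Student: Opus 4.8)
The plan is to use the optimality of the relaxed solution as a bridge. Since every feasible integer allocation is also feasible for the continuous relaxation, and $\tilde{\mathcal{N}}^*$ maximizes $f$ over that larger feasible set, we have $f(\mathcal{N}^\text{opt}) \le f(\tilde{\mathcal{N}}^*)$. It therefore suffices to control the pure rounding gap $f(\tilde{\mathcal{N}}^*) - f(\mathcal{N}^*)$, after which $f(\mathcal{N}^\text{opt}) - f(\mathcal{N}^*) \le f(\tilde{\mathcal{N}}^*) - f(\mathcal{N}^*)$ gives the claim.

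Next I would split the gap according to the two terms of the objective. Writing $f$ as a sum over the route-edges of $V\log P_e(n_e) - q\, n_e$, the gap becomes
\begin{align}
f(\tilde{\mathcal{N}}^*) - f(\mathcal{N}^*) &= V\sum_{\phi\in\Phi}\sum_{e\in r(\phi)}\big[\log P_e(\tilde n^*_e) - \log P_e(n^*_e)\big] \nonumber\\
&\quad - q\sum_{\phi\in\Phi}\sum_{e\in r(\phi)}\big(\tilde n^*_e - n^*_e\big). \nonumber
\end{align}
Because the down-rounding never increases an allocation, $\tilde n^*_e - n^*_e \ge 0$, and since $q \ge 0$ the linear (cost) term is non-positive and can be discarded. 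The whole burden then falls on the per-edge log-penalty differences.

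The core step is a uniform per-edge bound. Using \eqref{round_diff}, i.e. $n^*_e \ge 1$ and $\tilde n^*_e - n^*_e \le 1$, I would set $x = 1-p_e \in (0,1)$ and write the edge ratio as $P_e(\tilde n^*_e)/P_e(n^*_e) = (1-x^{\tilde n^*_e})/(1-x^{n^*_e})$. This ratio is increasing in the fractional gap $\tilde n^*_e - n^*_e$ and, at a fixed integer level, the quantity $(1-x^{m+1})/(1-x^m)$ is decreasing in $m$; hence its worst case is $\tilde n^*_e - n^*_e \to 1$ together with $n^*_e = 1$, giving $(1-x^2)/(1-x) = 1+x = 2 - p_e \le 2 - p^{\min}$. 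Taking logarithms bounds each summand by $\log(2-p^{\min})$, and summing over the at most $FL$ route-edges (at most $F$ SD pairs, each route of length at most $L$) yields $f(\tilde{\mathcal{N}}^*) - f(\mathcal{N}^*) \le VFL\log(2-p^{\min}) = \Delta$.

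The main obstacle I anticipate is making this per-edge ratio bound rigorous and uniform: I must verify $(1-x^{m+\delta})/(1-x^m) \le 1+x$ for every integer $m \ge 1$ and every $\delta \in [0,1)$, which is cleanest via the monotonicity of $(1-x^{m+1})/(1-x^m)$ in $m$. A secondary subtlety is the surplus-allocation step: if it is permitted to raise some $n^*_e$ above $\tilde n^*_e$, the cost term is no longer manifestly non-positive, so I would compare instead against the pure down-rounded allocation $\lfloor \tilde n^*_e\rfloor$ (which satisfies $n^*_e \le \tilde n^*_e$ and whose objective lower-bounds $f(\mathcal{N}^*)$ whenever the surplus step is only taken when it does not decrease $f$). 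In any such case the log differences only shrink while the cost term stays non-positive, so $\Delta$ remains a valid upper bound.
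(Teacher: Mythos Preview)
Your proposal is correct and follows essentially the same route as the paper: both use the relaxed optimum $\tilde{\mathcal{N}}^*$ as a bridge so that $f(\mathcal{N}^\text{opt})\le f(\tilde{\mathcal{N}}^*)$, discard the cost term by nonnegativity of $q$ and of $\tilde n^*_e-n^*_e$, and bound each log-difference by $\log(2-p_e)$ via $n^*_e\ge 1$ and $\tilde n^*_e-n^*_e\le 1$ before summing over at most $FL$ route-edges. The only cosmetic difference is that the paper obtains the per-edge bound in one line by invoking the concavity of $\log P_e$ (so increments are largest at the left endpoint, giving $\log P_e(\tilde n^*_e)-\log P_e(n^*_e)\le \log P_e(2)-\log P_e(1)=\log(2-p_e)$), whereas you reach the same inequality via the explicit ratio $(1-x^{m+1})/(1-x^m)$; your treatment of the surplus-allocation subtlety is more careful than the paper's, which simply asserts $f_2(\tilde{\mathcal{N}}^*)\ge f_2(\mathcal{N}^*)$ ``due to the down-rounding operation.''
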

\begin{proof}
We examine the two terms in the objective function separately and denote:
\begin{align}
f_1(\mathcal{N}) &= V\cdot \sum_{\phi \in \Phi}\sum_{e \in r(\phi)} \log P_e(n_e(r(\phi))), \\
f_2(\mathcal{N}) &= q\cdot \sum_{\phi \in \Phi}\sum_{e \in r(\phi)} n_e(r(\phi)).
\end{align}
Consider the optimal solution $\tilde{\mathcal{N}}^*$ of the continuous-relaxed problem and its rounded solution $\mathcal{N}^*$. 
\begin{align}
&f_1(\tilde{\mathcal{N}}^*) - f_1(\mathcal{N}^*) \nonumber\\
=& V\cdot \sum_{\phi \in \Phi}\sum_{e \in r(\phi)} \left(\log P_e(\tilde{n}^*_e(r(\phi))) - \log P_e(n^*_e(r(\phi)))\right) \nonumber\\
\leq & V\cdot \sum_{\phi \in \Phi}\sum_{e \in r(\phi)}\left(\log P_e(2) - \log P_e(1)\right) \label{ineq_1}\\
=& V\cdot \sum_{\phi \in \Phi}\sum_{e \in r(\phi)} \log (2 - p_e) \nonumber\\
\leq & VFL\log(2 - p^\text{min}).
\label{ineq_2}
\end{align} 
Inequality \eqref{ineq_1} is due to the concavity of the logarithm, and it follows from the relation \eqref{round_diff}. Inequality \eqref{ineq_2} uses the monotonicity of the logarithm. 

On the other hand, it is obvious that $f_2(\tilde{\mathcal{N}}^*) - f_2(\mathcal{N}^*) \geq 0$ due to the down-rounding operation. Therefore, we have $f(\tilde{\mathcal{N}}^*) - f(\mathcal{N}^*) \leq VFL\log(2 - p^\text{min})$. Moreover, since $\tilde{\mathcal{N}}^*$ solves the relaxed problem, we have $f(\tilde{\mathcal{N}}^*) \geq f(\mathcal{N}^\text{opt})$ and hence $f(\mathcal{N}^\text{opt}) - f(\mathcal{N}^*) \leq VFL\log(2 - p^\text{min})$. 
\end{proof}
Proposition~\ref{prop1} demonstrates that our continuous-relaxation approach produces a $\Delta$-optimal solution for the per-slot problem \textbf{P2} with any given selected routes $\boldsymbol{r}(\Phi)$. The remaining step to complete the solution for \textbf{P2} is to determine the route for each $\phi \in \Phi$.

\begin{algorithm}[t]
	\caption{Route Selection}
	\begin{algorithmic}[1]
    \State \textbf{Input}: Initial route selection $\boldsymbol{r}^0$
    \State \textbf{Output}: Optimal route selection $\boldsymbol{r}^*$ 
		 \For {$k = 1, 2, ...$} until stable
            \State Randomly select a SD pair $\phi$
            \State Virtually modify its selection $\tilde{r}(\phi)$
            \State $\tilde{\boldsymbol{r}}^k \leftarrow (\tilde{r}(\phi), \{r^{k-1}(\phi')\}_{\phi' \neq \phi})$
            \State Allocate the qubits for $\tilde{\boldsymbol{r}}^k$ by Algorithm \ref{alg:qubit_allocation}
            \State Compute probability $\eta$ by \eqref{eta}
            \State With probability $1-\eta$, keep $\boldsymbol{r}^k = \boldsymbol{r}^{k-1}$; With probability $\eta$, change $\boldsymbol{r}^k = \tilde{\boldsymbol{r}}^k$
        \EndFor
	\end{algorithmic}\label{alg:route_selection}
\end{algorithm}

\subsubsection{\textbf{Route Selection}}
With the computed qubit allocation $\mathcal{N}^*(\boldsymbol{r})$ for any given route selection $\boldsymbol{r}$, determining the route is a straightforward process:
\begin{align}
\boldsymbol{r}^* = \arg\max_{\boldsymbol{r} \in \bigtimes_{\phi\in \Phi} \mathcal{R}(\phi)} f(\boldsymbol{r}, \mathcal{N}^*(\boldsymbol{r})).
\end{align}
In other words, we perform an exhaustive search on all possible route combinations for the SD pairs in $\Phi$ and select the combination with the highest per-slot objective value by applying the qubit allocation algorithm. Note that the candidate set $\mathcal{R}(\Phi)$ can be computed beforehand by selecting routes with shorter lengths/hops to reduce its size or using any other existing shortest path finding algorithm such as Dijkstra's Algorithm \cite{dijkstra2022note}. Let $\boldsymbol{r}^\text{opt}$ and $\mathcal{N}^\text{opt}(\boldsymbol{r}^\text{opt})$ be the optimal joint route selection and qubit allocation solution. Then, the solution produced by our approach, namely $\boldsymbol{r}^*$ and $\mathcal{N}^*(\boldsymbol{r}^*)$, is also $\Delta$-optimal, i.e.,
\begin{align}
f(\boldsymbol{r}^\text{opt}, \mathcal{N}^\text{opt}(\boldsymbol{r}^\text{opt})) - f(\boldsymbol{r}^*, \mathcal{N}^*(\boldsymbol{r}^*)) \leq \Delta.
\end{align}

However, due to the combinatorial nature of the route space for each SD pair in $\Phi$, the exhaustive search approach is only effective in certain special scenarios where either the number of SD pairs in a time slot is small or the number of candidate routes for each SD pair is small. Although these special scenarios have practical significance (e.g., when the QDN have pre-computed only one or a small number of candidate routes for each SD pair or when the EC request rate is low), in the remainder of this section, we develop a low-complexity approach based on Gibbs sampling (GS) \cite{geman1984stochastic} that can handle the general scenario.

Our algorithm follows an iterative process as described in Algorithm~\ref{alg:route_selection} and operates as follows. Initially, each SD pair $\phi \in \Phi$ randomly selects a route $r^0(\phi) \in \mathcal{R}(\phi)$ from the candidate set, and the qubit allocation $\mathcal{N}^(r^0)$ is computed. In each iteration $k$, one SD pair, chosen randomly (say $\phi$), virtually modifies its current route selection to $\tilde{r}(\phi)$, while the routes for the remaining SD pairs remain unchanged. Denoting $\tilde{\boldsymbol{r}}^k = (\tilde{r}(\phi), \{r^{k-1}(\phi')\}_{\phi' \neq \phi})$, the qubits are then allocated according to the algorithm from the previous subsection, resulting in the objective function value $f(\tilde{\boldsymbol{r}}^k, \mathcal{N}^(\tilde{\boldsymbol{r}}^k))$. The algorithm proceeds by computing the difference in objective function values: $f(\tilde{\boldsymbol{r}}^k, \mathcal{N}^*(\tilde{\boldsymbol{r}}^k)) - f(\boldsymbol{r}^{k-1}, \mathcal{N}^*(\tilde{\boldsymbol{r}}^{k-1}))$, and then updating the route selection decision based on this difference. Specifically, with probability $\eta$, the route selection remains the same as in the previous iteration, i.e., $\boldsymbol{r}^k = \boldsymbol{r}^{k-1}$, and with probability $1-\eta$, the route selection changes, i.e., $\boldsymbol{r}^k = \tilde{\boldsymbol{r}}^k$. The probability $\eta$ is calculated as:
\begin{align}
\label{eta}
\eta = \left((1 + \exp{\frac{f(\tilde{\boldsymbol{r}}^k, \mathcal{N}^*(\tilde{\boldsymbol{r}}^k)) - f(\boldsymbol{r}^{k-1}, \mathcal{N}^*(\tilde{\boldsymbol{r}}^{k-1}))}{\gamma}}\right)^{-1},
\end{align}
where $\gamma > 0$ is a parameter controlling the degree of exploration versus exploitation (i.e., the level of randomness). Consequently, changing the route selection is more likely to occur if the new route selection $\tilde{\boldsymbol{r}}^k$ results in a higher objective value.

\textbf{Remark}: (1) In combinatorial optimization, it is widely recognized that purely greedily selecting better decisions can often lead to local optima. To avoid getting stuck in local optima, our algorithm incorporates a probabilistic exploration mechanism, even when it may result in worse performance than the current decision. However, such an algorithm is known to possess an asymptotic convergence rate. Specifically, as $\gamma \to 0$, the algorithm converges to the global optimal solution with a probability of 1. (2) The convergence rate of our algorithm can be further improved by allowing multiple SD pairs to simultaneously evolve their route selection in each iteration, as long as they do not share common edges in the candidate route set. This simultaneous evolution is particularly beneficial when the SD pairs are spatially far apart, as it enables them to explore different parts of the solution space concurrently. By diversifying the exploration, our algorithm becomes more efficient in converging towards the global optimum.

\subsection{Performance Analysis}
In this section, we present a theoretical performance analysis for the proposed quantum routing and qubit allocation algorithm. As a reminder, our algorithm for the per-slot problem produces a $\Delta$-optimal solution at each time slot $t$. The level of $\Delta$-optimality achieved is exact when route selection employs exhaustive search (which is effective with a small $F$ and a small $R$) and asymptotic when route selection employs the low-complexity algorithm from the previous subsection. Furthermore, we introduce the following assumption, which is essential for the feasibility of the system:
\begin{assumption}
The cost budget $C$ satisfies $C \geq FLT$.
\end{assumption}
This assumption ensures that there is a sufficient cost budget to establish at least one quantum route for each source-destination (SD) pair in each time slot. In the worst case scenario, each link in the route utilizes at least one quantum channel. Meeting this budget requirement is crucial for the minimum operation of the system.

\begin{theorem}
    Under Assumption 1, solving the per-slot problem \textbf{P2} with $\Delta$-optimality in each time slot $t$ ensures the following bound on the constraint violation:
    \begin{align}
    \frac{1}{T}\sum_{t=0}^{T-1} c^t - \frac{C}{T} \leq \sqrt{\frac{(q^0)^2}{T^2} + \frac{2(\Delta + B - VFL\log(p^\text{min}))}{T}} - \frac{q^0}{T}.
    \end{align}
\end{theorem}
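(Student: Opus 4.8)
The plan is to run the standard Lyapunov drift-plus-penalty argument on the quadratic Lyapunov function $L(q^t) = \tfrac{1}{2}(q^t)^2$, and to convert the $\Delta$-optimality of the per-slot solver into a constant, slot-independent drift bound that telescopes into a bound on the terminal queue $q^T$, which in turn controls the average constraint violation.

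\emph{First, I would reduce the claim to an upper bound on $q^T$.} Dropping the outer $\max\{0,\cdot\}$ in \eqref{queue} only decreases the right-hand side, so $q^{t+1} \geq q^t + c^t - C/T$. Summing over $t = 0, \dots, T-1$ telescopes to $q^T - q^0 \geq \sum_{t=0}^{T-1}(c^t - C/T)$, whence $\tfrac{1}{T}\sum_{t=0}^{T-1} c^t - \tfrac{C}{T} \leq \tfrac{q^T - q^0}{T}$. It therefore suffices to show $q^T \leq \sqrt{(q^0)^2 + 2T(\Delta + B - VFL\log p^{\min})}$, after which substituting into $(q^T - q^0)/T$ and pulling the $1/T$ inside the root yields exactly the stated expression.

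\emph{Second, I would derive the one-slot drift.} Squaring the recursion \eqref{queue} and using $(\max\{0,x\})^2 \leq x^2$ gives $\tfrac{1}{2}(q^{t+1})^2 - \tfrac{1}{2}(q^t)^2 \leq \tfrac{1}{2}(c^t - C/T)^2 + q^t(c^t - C/T)$. The first term is uniformly bounded since the per-slot cost $c^t$ is capped by the capacities, so I absorb it into the constant $B$ (a uniform bound on $\tfrac{1}{2}(c^t - C/T)^2$), leaving the cross term $q^t(c^t - C/T)$ to be controlled.

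\emph{Third, and this is the crux, I would bound the cross term using $\Delta$-optimality against a minimal feasible baseline.} Since the solver is $\Delta$-optimal for \textbf{P2}, its objective $f = V u^t - q^t c^t$ is no worse, up to $\Delta$, than that of any feasible policy. I take the baseline to be the minimal allocation that routes each SD pair with exactly one qubit/channel per edge; this is feasible under the capacity constraints and, crucially, has cost $\hat{c}^t \leq FL \leq C/T$ by Assumption 1, with utility $\hat{u}^t = \sum_{\phi}\sum_{e}\log p_e \geq FL\log p^{\min}$. The $\Delta$-optimality inequality rearranges to $q^t(c^t - \hat{c}^t) \leq V u^t - V\hat{u}^t + \Delta$; using $q^t \geq 0$ together with $\hat{c}^t \leq C/T$ gives $q^t(c^t - C/T) \leq q^t(c^t - \hat{c}^t)$, and then dropping the algorithm's own utility via $u^t \leq 0$ (a sum of log-probabilities) and inserting $-\hat{u}^t \leq -FL\log p^{\min}$ collapses everything to the constant $q^t(c^t - C/T) \leq \Delta - VFL\log p^{\min}$, independent of $t$ and $q^t$.

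\emph{Finally, I would telescope.} Combining the last two steps yields the uniform drift bound $\tfrac{1}{2}(q^{t+1})^2 - \tfrac{1}{2}(q^t)^2 \leq B + \Delta - VFL\log p^{\min}$; summing over the $T$ slots gives $(q^T)^2 \leq (q^0)^2 + 2T(B + \Delta - VFL\log p^{\min})$, and substituting the resulting bound on $q^T$ into Step~1 produces the claim. The main obstacle is the third step: selecting a feasible baseline whose cost is dominated by $C/T$ — which is exactly what Assumption~1 purchases — and simultaneously handling both the baseline utility (lower-bounded by $VFL\log p^{\min}$) and the unknown algorithm utility (discarded via $u^t \leq 0$), so that the cross term reduces to a constant. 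The finiteness of $B$ and the capacity-feasibility of the one-qubit-per-edge baseline also need to be justified from the capacity constraints \eqref{qubit_cap} and \eqref{channel_cap}.
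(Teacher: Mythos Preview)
Your proposal is correct and follows essentially the same Lyapunov drift-plus-penalty argument as the paper: bound the one-slot drift by $B + q^t(c^t - C/T)$, compare the algorithm's $\Delta$-optimal per-slot objective against the one-qubit-per-edge baseline (whose cost is at most $FL \leq C/T$ by Assumption~1), discard the algorithm's utility via $u^t \leq 0$, telescope the resulting constant drift bound to control $q^T$, and convert that into the average-violation bound via $q^{t+1} \geq q^t + c^t - C/T$. The only cosmetic difference is that the paper packages the comparison as a lower bound on $V u^t - \delta(t)$ before invoking $u^t \leq 0$, whereas you isolate and bound the cross term $q^t(c^t - C/T)$ directly; the logical content is identical.
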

\begin{proof}
The Lyapunov drift of the virtual queue length is
\begin{align}
    \delta(t) =& \frac{1}{2}[(q^{t+1})^2 - (q^t)^2]  \leq  \frac{1}{2} [(q^t + c^t - C/T)^2 - (q^t)^2]\nonumber\\
    = &q^t (c^t - C/T) + \frac{1}{2}(c^t - C/T)^2
    = q^t (c^t - C/T) + B, 
    \label{delta_bound}
\end{align}
where $B$ is finite constant and it exists because the qubit/channel capacity is upper bounded. 
\begin{align}
&V\cdot\sum_{\phi\in \Phi^t}\log P(r^t(\phi), \mathcal{N}^t(r^t(\phi))) - \delta(t)\nonumber\\
\geq& V\cdot\sum_{\phi\in \Phi^t}\log P(r^t(\phi), \mathcal{N}^t(r^t(\phi))) - q^t\cdot \sum_{\phi \in \Phi^t}\sum_{e \in r^t(\phi)} n^t_e(r^t(\phi))) \nonumber\\
&+ q^t C/T - B \nonumber\\
\geq & \sum_{\phi\in \Phi^t}\sum_{e \in \tilde{r}^t(\phi)}(V\log P_e(1) - q^t\cdot 1) - \Delta + q^t C/T - B \nonumber\\
\geq &VFL\log(p^\text{min}) + q^t(CT - VFL) - \Delta - B \nonumber\\
\geq &VFL\log(p^\text{min}) - \Delta - B.
\end{align}
The first inequality obtained is by plugging the bound \eqref{delta_bound} on $\delta(t)$. The second inequality holds due to the $\Delta$-optimality of our per-slot solution. Specifically, we consider a particular solution that chooses route $\tilde{r}(\phi)$ for SD pair $\phi \in \Phi$, and uses a single quantum channel on each edge. Clearly, the route selection and qubit allocation produced by our algorithm performs no worse than this solution by a constant $\Delta$. The third inequality is due to the monotonicity of the entanglement success rate and $V\log P_e(1) - q^t < 0$. The last inequality holds due to Assumption 1. 

Denote $D \triangleq \Delta + B - VFL\log(p^\text{min}) > 0$. Since $\sum_{\phi\in \Phi^t}\log P(r^t(\phi), \mathcal{N}^t(r^t(\phi))) < 0$ always holds, we have $\delta(t) \leq D$. 
Because $\frac{1}{2}[(q^T)^2 - (q^0)^2] = \sum_{t=0}^{T-1}\delta(t) \leq D T$, we then have $q^T \leq \sqrt{(q^0)^2 + 2DT}$. According to the virtual queue dynamics, $q^{t+1} \geq q^t + c^t - C/T$. This leads to
\begin{align}
    &\frac{1}{T}\sum_{t=0}^{T-1} c^t - \frac{C}{T} \leq \frac{1}{T}\sum_{t=0}^{T-1}(q^{t+1} - q^t) = \frac{q^T - q^0}{T}\nonumber\\
    \leq & \sqrt{\frac{(q^0)^2}{T^2} + \frac{2D}{T}} - \frac{q^0}{T}.
\end{align}
\end{proof}

\textbf{Remarks:} Theorem 1 provides valuable insights into the budget constraint's behavior with respect to system parameters, the initial queue length $q^0$ and the parameter $V$. As $T$ approaches infinity, the budget constraint is asymptotically satisfied. For a finite $T$, the right-hand-side (RHS) of the constraint decreases as the initial queue length $q^0$ increases. In fact, in the limit as $q^0$ approaches infinity, the RHS tends to 0. This intuitive behavior occurs because a large initial queue length results in our algorithm applying a significant penalty to any violation of the budget constraint, leading to a highly conservative qubit allocation strategy. Moreover, Theorem 1 highlights the impact of the parameter $V$ on the budget constraint. With a larger value for $V$, the potential violation becomes greater (noting that $\log(p^\text{min}) < 0$). This observation aligns with intuition, as a higher value of $V$ indicates that the algorithm places more emphasis on improving the entanglement success rate. As a consequence, the algorithm becomes less cautious in managing the qubit allocation to avoid any budget constraint violation.

\begin{theorem}
    Under Assumption 1, solving the per-slot problem \text{P2} with $\Delta$-optimality in each time slot $t$ ensures the following bounds on the objective of \textbf{P1}
    \begin{align}
        \frac{1}{T}\sum_{t=0}^{T-1} \mathbb{E}[u(\boldsymbol{r}^t, \mathcal{N}^t)] \geq \text{OPT} - \frac{\Delta + B}{V} - \frac{(q^0)^2}{2VT},
    \end{align}
    where $\text{OPT}$ denotes the expected optimal value of the time-averaged objective given by a possibly randomized offline algorithm that has complete statistics of all $T$ time slots. 
\end{theorem}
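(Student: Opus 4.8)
The plan is to carry out the standard drift-plus-penalty accounting, reusing the per-slot drift estimate already derived in the proof of Theorem~1. Set the Lyapunov function $L(t) = \frac{1}{2}(q^t)^2$ and let $\delta(t) = L(t+1) - L(t)$ denote its one-slot drift; from the proof of Theorem~1 we already have $\delta(t) \leq q^t(c^t - C/T) + B$. First I would form the drift-minus-reward expression $V\cdot u(\boldsymbol{r}^t,\mathcal{N}^t) - \delta(t)$ and bound it below by plugging in the drift estimate, which gives $V\cdot u(\boldsymbol{r}^t,\mathcal{N}^t) - \delta(t) \geq \bigl[V\cdot u(\boldsymbol{r}^t,\mathcal{N}^t) - q^t c^t\bigr] + q^t C/T - B$. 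The bracketed quantity is exactly the per-slot objective of \textbf{P2} that our algorithm $\Delta$-maximizes, which is what lets the optimality of the per-slot solver enter the bound.

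The crux of the argument, and the step I expect to be the main obstacle, is the comparison against the offline optimum. The clean route is to invoke the existence of an optimal stationary randomized policy $\pi^\star$ whose route-selection and qubit-allocation decisions depend only on the current realization $(\Phi^t, \{Q_v^t\}, \{W_e^t\})$, that satisfies the per-slot capacity constraints for every realization, and that meets $\mathbb{E}[u^{\pi^\star}(t)] \geq \text{OPT}$ together with $\mathbb{E}[c^{\pi^\star}(t)] \leq C/T$. Because the realized decision of $\pi^\star$ is feasible for the per-slot capacity constraints and our solver attains $\Delta$-optimality on \textbf{P2} (which maximizes precisely $V\cdot u - q^t c$), for every realization we get $V\cdot u(\boldsymbol{r}^t,\mathcal{N}^t) - q^t c^t \geq V\cdot u^{\pi^\star}(t) - q^t c^{\pi^\star}(t) - \Delta$. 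Taking the expectation conditioned on $q^t$ and using that $\pi^\star$ acts on the current realization only, hence is independent of the past-determined queue $q^t$, the conditional expectations of $u^{\pi^\star}(t)$ and $c^{\pi^\star}(t)$ shed their conditioning and we obtain $\mathbb{E}\bigl[V\cdot u(\boldsymbol{r}^t,\mathcal{N}^t) - q^t c^t \mid q^t\bigr] \geq V\cdot\text{OPT} - q^t C/T - \Delta$. The delicate points here are the rigorous existence of $\pi^\star$ (a Caratheodory-type argument, in which a stationarity/independence assumption on the exogenous arrival process is what decouples $\pi^\star$ from $q^t$) and ensuring its decisions live in the same per-slot feasible set our solver optimizes over.

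To finish, I would substitute this conditional bound into the drift-minus-reward estimate; the queue-weighted budget terms cancel because $q^t C/T$ appears with opposite signs, leaving the clean per-slot inequality $\mathbb{E}[V\cdot u(\boldsymbol{r}^t,\mathcal{N}^t)] - \mathbb{E}[\delta(t)] \geq V\cdot\text{OPT} - \Delta - B$ after taking total expectation. Summing over $t = 0,\dots,T-1$ telescopes the drift into $\sum_{t=0}^{T-1}\mathbb{E}[\delta(t)] = \frac{1}{2}\mathbb{E}[(q^T)^2] - \frac{1}{2}(q^0)^2 \geq -\frac{1}{2}(q^0)^2$, where the inequality uses $(q^T)^2 \geq 0$. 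Rearranging the summed inequality and dividing through by $VT$ then yields the claimed bound $\frac{1}{T}\sum_{t=0}^{T-1}\mathbb{E}[u(\boldsymbol{r}^t,\mathcal{N}^t)] \geq \text{OPT} - \frac{\Delta + B}{V} - \frac{(q^0)^2}{2VT}$.
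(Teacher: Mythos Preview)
Your proposal is correct and follows essentially the same drift-plus-penalty accounting as the paper: bound $\delta(t)$ by $q^t(c^t - C/T) + B$, invoke the $\Delta$-optimality of the per-slot solver against the offline optimal decisions, cancel the $q^tC/T$ terms using $\mathbb{E}[\hat{c}^t] = C/T$, telescope the drift, and divide by $VT$. If anything, your treatment of the comparison step---explicitly invoking a stationary randomized policy independent of $q^t$ and conditioning carefully---is more rigorous than the paper's, which simply asserts $\mathbb{E}[\hat{c}^t] = C/T$ for the offline optimum without justifying the decoupling from $q^t$.
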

\begin{proof}
The proof mostly follows Theorem 4.8 in \cite{neely2022stochastic} by incorporating the $q^0$ term. 


Denote $\hat{\boldsymbol{r}}^t$ and $\hat{\mathcal{N}}^t$ as the joint route selection and qubit selection outcome produced by the offline optimal algorithm that achieves $\text{OPT}$. Consider the objective function minus drift
\begin{align}
    &V\cdot \sum_{t=0}^{T-1}u(\boldsymbol{r}^t, \mathcal{N}^t) - \sum_{t=0}^{T-1}\delta(t)\nonumber\\
    \geq & V \cdot \sum_{t=0}^{T-1}u(\boldsymbol{r}^t, \mathcal{N}^t) - \sum_{t=0}^{T-1}q^t(c^t - C/T) - BT\nonumber\\
    = &\sum_{t=0}^{T-1}\left(V\cdot u(\boldsymbol{r}^t, \mathcal{N}^t) - q^t c^t\right) + \sum_{t=0}^{T-1} q^t C/T - BT\nonumber\\
    \geq &\sum_{t=0}^{T-1}\left(V\cdot u(\hat{\boldsymbol{r}}^t, \hat{\mathcal{N}}^t) - q^t \hat{c}^t - \Delta \right) + \sum_{t=0}^{T-1} q^t C/T - BT\nonumber\\
    = & V\cdot\sum_{t=0}^{T-1} u(\hat{\boldsymbol{r}}^t, \hat{\boldsymbol{N}}^t)  + \sum_{t=0}^{T-1} q^t (C/T - \hat{c}^t) - (\Delta+B)T.
\end{align}
The first inequality is due to the bound \eqref{delta_bound} on $\delta(t)$. The second inequality is because $\boldsymbol{r}^t$ and $\mathcal{N}^t$ produces a $\Delta$-optimal solution given $q^t$ in time slot $t$. Here, $\hat{c}^t$ is the cost incurred in time slot $t$ by the offline optimal solution.  

Now, taking expectation over the system randomness on both sides, we have
\begin{align}
&V\cdot \sum_{t=0}^{T-1}\mathbb{E}[u(\boldsymbol{r}^t, \mathcal{N}^t)] - \sum_{t=0}^{T-1}\delta(t) \nonumber\\
\geq &V\cdot \sum_{t=0}^{T-1} \mathbb{E}[u(\hat{\boldsymbol{r}}^t, \hat{\boldsymbol{N}}^t)] + \sum_{t=0}^{T-1} q^t (C/T - \mathbb{E}[\hat{c}^t]) - (\Delta + B)T \nonumber\\
= &V\cdot \text{OPT} - (\Delta + B)T,
\end{align}
where the last equality utilizes $\mathbb{E}[\hat{c}^t] = C/T$ for the optimal offline algorithm, which is independent of $q^t$. Finally, noticing $\sum_{t=0}^{T-1}\delta(t) \geq -\frac{1}{2}(q^0)^2$ and moving it to the right hand side yields
\begin{align}
&V\cdot \sum_{t=0}^{T-1}\mathbb{E}[u(\boldsymbol{r}^t, \mathcal{N}^t)] \geq V\cdot \text{OPT} - (\Delta + B)T -\frac{1}{2}(q^0)^2.
\end{align}
Dividing both sides by $VT$ yields the final result. 
\end{proof}

\textbf{Remark}: Theorem 2 provides important insights into the performance of our algorithm, showing that the expected objective value achieved by running the algorithm is approximately optimal, with a constant gap that depends on various factors. Firstly, the optimality gap is influenced by the system parameters, the initial queue length $q^0$, and the parameter $V$. Specifically, with a larger $q^0$, the optimality gap becomes larger. This observation is intuitive as a larger initial queue length prompts the algorithm to act more cautiously in avoiding any budget constraint violation, which could potentially lead to a sacrifice in entanglement performance. However, as the time horizon $T$ approaches infinity, the impact of the initial queue length diminishes, suggesting that over a sufficiently long duration, the algorithm's performance becomes less sensitive to the initial queue length. Secondly, the optimality gap is affected by the parameter $V$. A larger value for $V$ results in a reduced optimality gap. The reason behind this is that the algorithm places a higher emphasis on improving the entanglement performance with an increased $V$, thereby prioritizing the entanglement success rate over budget constraint satisfaction.

\section{Simulation Results}\label{s5}

In this section, we conduct an evaluation of our proposed algorithm OSCAR while also comparing its performance against several baseline methods. Additionally, we delve into an ablation study to assess the impact of various control parameters.

\subsection{Simulation Setup}
\subsubsection{Network Topology}
To create a random QDN topology for benchmarking purposes, we adopt the following procedure. Initially, we select a predefined number of nodes within a $100\times100$ unit square area,. Subsequently, we utilize the Waxman graph \cite{waxman1988routing} to generate the specific topology. This model establishes edges between nodes $u$ and $v$ with a probability of $\beta \exp \left(-\frac{d(u,v)}{\alpha d_{\max}}\right)$, where $d(u,v)$ denotes the Euclidean distance between nodes $u$ and $v$, and $d_{\max}$ represents the longest distance among any pair of nodes in the network. The two control parameters $\alpha$ and $\beta$ govern the characteristics of the generated topology. It is important to note that this generation approach has been previously employed in various works focusing on quantum networks \cite{zhao2021redundant,li2022fidelity}.

\subsubsection{Default Parameters}
In the default configuration, the QDN consists of $20$ nodes, each with a random qubit capacity following a discrete uniform distribution $\mathcal{U}[10,16]$. The generation of edges between nodes is performed using $\alpha=0.5$ and $\beta=0.5$, resulting in an average node degree of approximately 4. The channel capacity of each generated edge is also chosen randomly from $\mathcal{U}[5,8]$. The probability of successfully establishing a quantum link is $2\times 10^{-4}$ per attempt, and during each time slot, $4000$ attempts can be made. The user operates under a total qubit cost budget of $C=5000$ over a period of $T=200$ time slots. Additionally, the number of source-destination (SD) pairs varies randomly in each time slot, following $\mathcal{U}[1,5]$. For the Lyapunov control parameter, we set $V = 2500$, and the initial value is $q^0 = 10$ by default. Furthermore, the GS parameter $\gamma$ is set to $500$. To obtain reliable results, we conduct 5 trial simulations and present the averaged outcomes.

\subsubsection{Baseline Schemes}
We compare our proposed algorithm OSCAR with two  baselines:

\textbf{Myopic-Fixed (MF)}: In this approach, we adopt a uniform allocation strategy, evenly distributing the total budget among each time slot. Consequently, the user's budget available for utilization in each time slot is given by $C/T$. Subsequently, we address the maximization problem on a per-slot basis, ensuring adherence to the budget constraint for each slot. 

\textbf{Myopic-Adaptive (MA)}: One drawback of MF is the potential for budget waste, as the allocation to each time slot may not be fully utilized. To tackle this concern, in MA, any remaining budget is uniformly distributed among the remaining time slots. Specifically, the budget available for the user in time slot $t$ becomes $(C - \tilde{C})/(T-t)$, where $\tilde{C}$ represents the total number of qubits consumed up to that point. The per-slot entanglement routing problem in MA is resolved in a manner similar to MF.

\subsection{Performance Comparison}
In this section, we compare the performance of OSCAR with the two baseline approaches under the default configuration.  
\subsubsection{Time-evolving Performance}

\begin{figure*}[th]
	\centering
    \begin{minipage}[b]{0.72\textwidth}
	\hspace{-4mm}
	\subfigure[Average Utility]{
		\begin{minipage}[b]{0.33\textwidth}
			\includegraphics[width=1\textwidth]{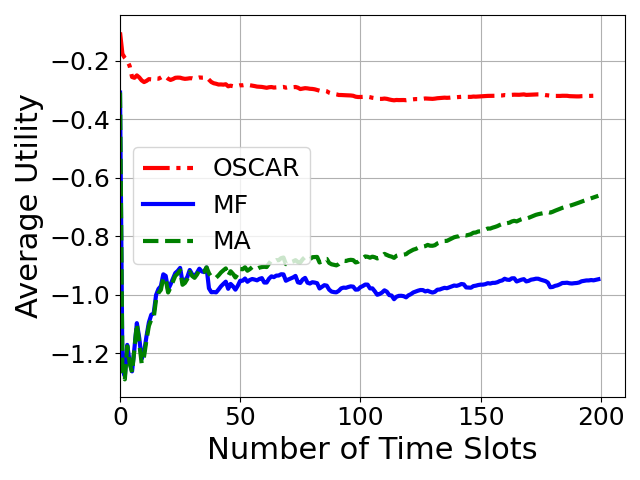}
		\end{minipage}
		\label{fig:main_utility}
	}
	\hspace{-4mm}
    	\subfigure[Average Success Rate]{
    		\begin{minipage}[b]{0.33\textwidth}
   		 	\includegraphics[width=1\textwidth]{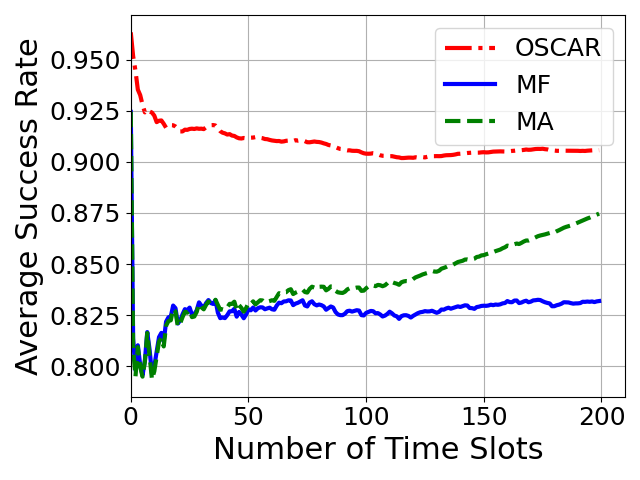}ev
    		\end{minipage}
		\label{fig:main_prob}
    	}
    \hspace{-4mm}
    	\subfigure[Average Qubits Usage]{
		    \begin{minipage}[b]{0.33\textwidth}
   	 	    \includegraphics[width=1\textwidth]{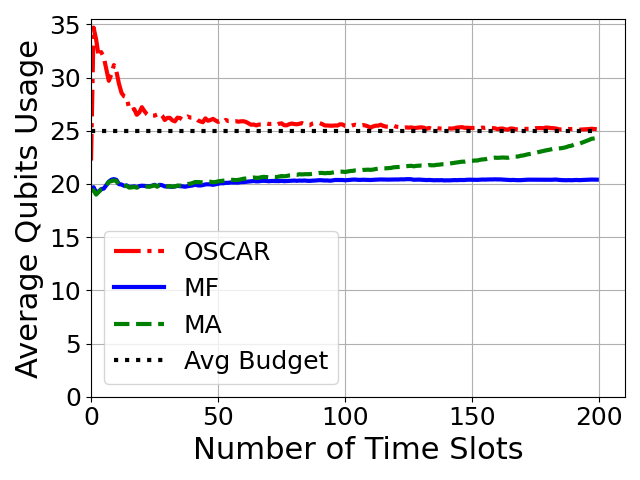}
		    \end{minipage}
	    \label{fig:main_budget}
	    }
	\caption{Time-evolving Performance of Different Methods.}
	\label{fig:main}
	 \vspace{-10 pt}
	\end{minipage}
    \vspace{-10 pt}
    \begin{minipage}[b]{0.25\textwidth}
   	 	\includegraphics[width=1\textwidth]{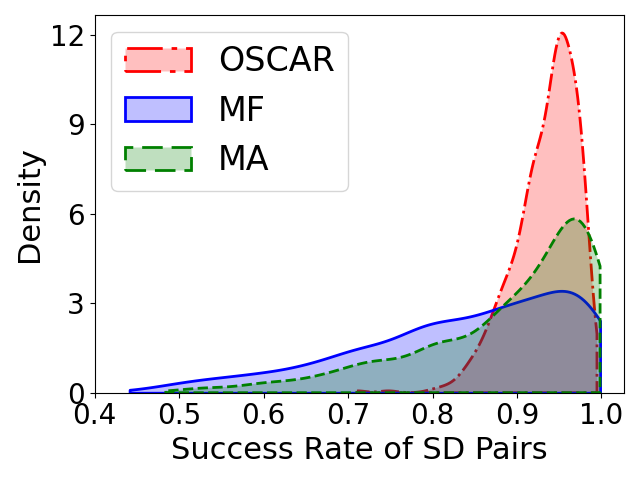}
   	\caption{Success Rate Distribution.}
   	\label{fig:main_density}
    \end{minipage}
\end{figure*}

Figure~\ref{fig:main} illustrates the time-evolving performance of OSCAR and two baseline methods in a specific experiment run, showcasing the average utility, average EC success rate, and average qubit usage. Notably, OSCAR outperforms MA and MF with a significantly higher utility and EC success rate (i.e., 0.9) while effectively adhering to the qubit budget constraint by the end of $T = 200$ time slots. Conversely, the myopic fixed budget allocation approach used by MF leads to under-utilization of the qubit budget, resulting in a much lower utility and EC success rate (i.e., 0.83). Although MA eventually achieves a similar qubit usage to OSCAR by the end of $T = 200$ time slots, its final utility and EC success rate (i.e., 0.875) remain noticeably lower than those of OSCAR. Of particular concern is the observation that the conservative qubit allocation employed by MA in the early time slots significantly lowers the average utility and EC success rate, compared to the late time slots, implying an unfair distribution of qubit resources among SD pairs over time. To further validate the fairness of our method, Figure~\ref{fig:main_density} presents the success rate distribution of different methods, confirming that OSCAR ensures a much more equitable allocation among the SD pairs.

\begin{figure}[t]
	\centering
    	\subfigure[Average Success Rate]{
    		\begin{minipage}[b]{0.45\linewidth}
   		 	\includegraphics[width=1\linewidth]{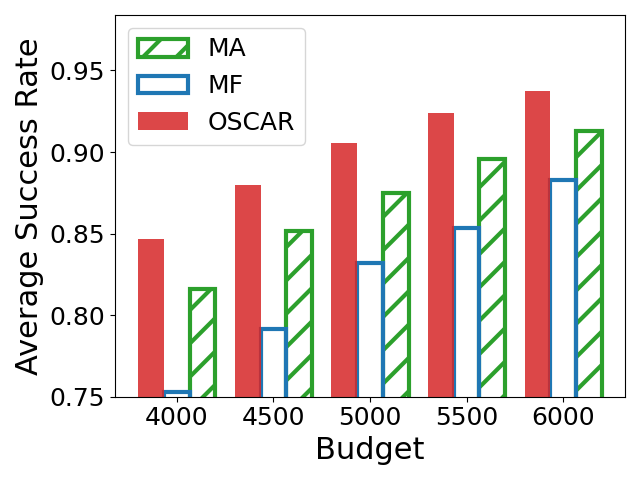}
    \vspace{-15 pt}       
    		\end{minipage}
		\label{fig:c_prob}

    	}
    	\subfigure[Average Qubits Usage]{
		    \begin{minipage}[b]{0.45\linewidth}
   	 	    \includegraphics[width=1\linewidth]{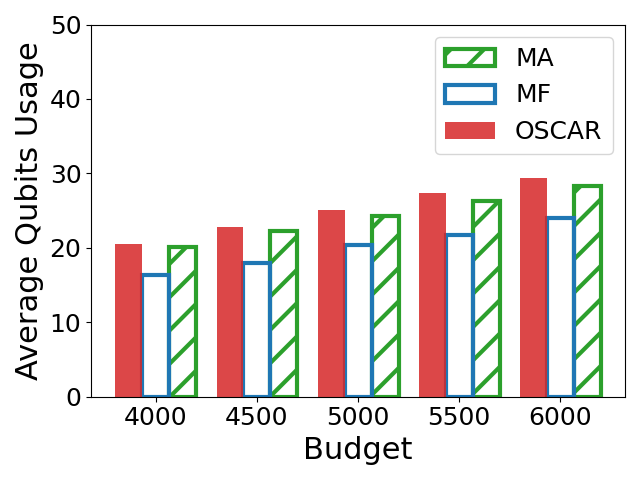}
      \vspace{-15 pt}          
		    \end{minipage}
	    \label{fig:c_budget}
	    }
	\caption{Impact of Budget.}
	\label{fig:c}
 \vspace{-10 pt}
\end{figure}

\subsubsection{Impact of Budget}
In Figure~\ref{fig:c}, we explore how the qubit budget $C$ influences the EC performance of various methods. As expected, all approaches exhibit improved EC success rates as the budget is increased, demonstrating the positive impact of allocating more qubits for establishing entanglement links. Notably, OSCAR consistently outperforms both MF and MA across different budget levels. However, it is worth noting that the performance gap between OSCAR and the baseline approaches diminishes as the budget becomes larger. This result is intuitive since ample resources allow for easier establishment of entanglement links with higher success rates through the allocation of more qubits. Consequently, this emphasizes the importance of judiciously distributing qubits, particularly when the budget is limited.

 \begin{figure}[t]
 \centering
    	\subfigure[Average Success Rate]{
    		\begin{minipage}[b]{0.45\linewidth}
   		 	\includegraphics[width=1\linewidth]{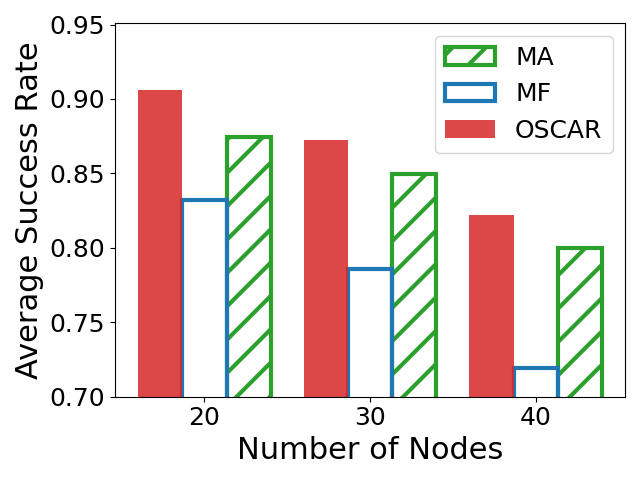}
      \vspace{-10 pt}       
    		\end{minipage}
		\label{fig:n_prob}
    	}
    	\subfigure[Average Qubits Usage]{
		    \begin{minipage}[b]{0.45\linewidth}
   	 	    \includegraphics[width=1\linewidth]{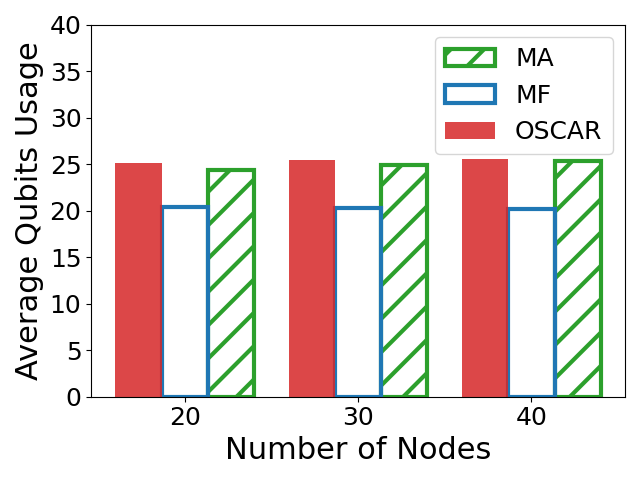}
      \vspace{-10 pt}          
		    \end{minipage}
	    \label{fig:n_budget}
	    }
	\caption{Impact of Network Size.}
	\label{fig:n}
 \vspace{-5 pt}
\end{figure}

\subsubsection{Impact of Network Size}
Figure~\ref{fig:n} demonstrates the impact of network size on the EC performance of different methods under the same qubit budget. We adjust the Waxman graph parameter to ensure an average node degree of approximately 4 across all network sizes.
As anticipated, all methods experience reduced EC success rates as the network size increases. This decline can be attributed to the longer routes required to connect the source and destination nodes in larger networks. However, even in the face of this challenge, OSCAR consistently outperforms both MF and MA across various network scales. This emphasizes the potential benefits of employing OSCAR in real-world scenarios where network scales may vary.

\subsection{Impact of Algorithm Parameters} \label{sec:ablation}
\begin{figure}[t]
	\centering
    \begin{minipage}[b]{0.24\textwidth}
        \includegraphics[width=1\textwidth]{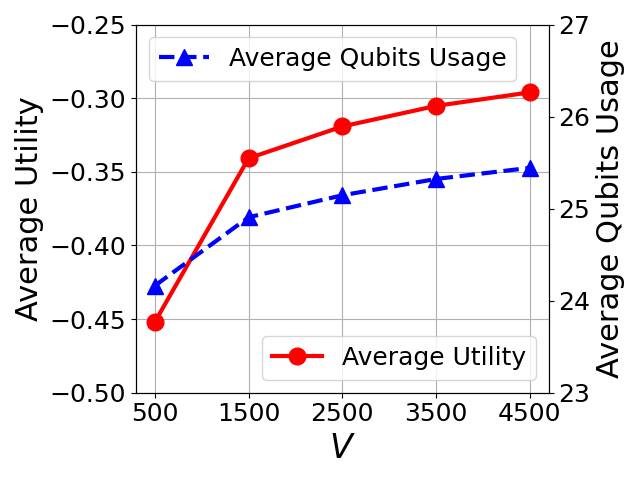}
   	\caption{Impact of $V$.}
   	\label{fig:V}
	\end{minipage}
    \vspace{-10 pt}
    \begin{minipage}[b]{0.24\textwidth}
        \includegraphics[width=1\textwidth]{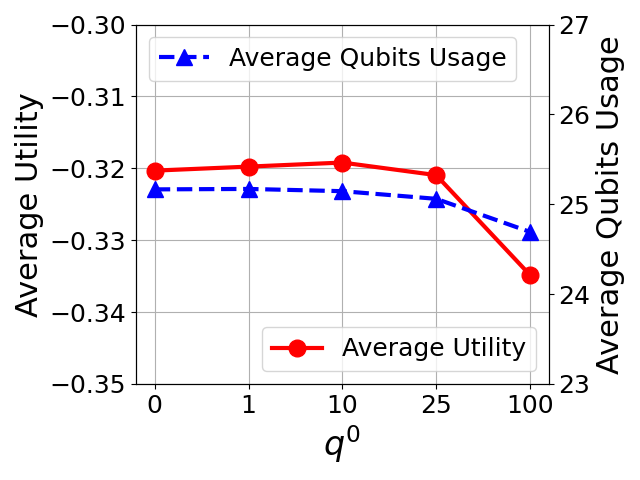}
   	\caption{Impact of $q^0$.}
   	\label{fig:q0}
	\end{minipage}
\end{figure} 

We now investigate the impact of algorithm parameters of OSCAR on its performance. 

\subsubsection{Impact of control parameter $V$}
The control parameter $V$ plays a crucial role in striking a balance between maximizing entanglement performance and adhering to the budget constraint. When $V$ is larger, greater emphasis is placed on maximizing utility, while a smaller $V$ prioritizes staying within the budget constraint. As demonstrated in Figure~\ref{fig:V}, tuning $V$ effectively achieves this trade-off. With an increased value of $V$, OSCAR attains higher utility, which highlights its capability to optimize entanglement performance. However, it is observed that a larger $V$ also leads to a greater violation of the qubit budget constraint, indicating that a higher focus on performance optimization may come at the expense of exceeding the allocated qubit resources. This is aligned with our theoretical results. Hence, the choice of $V$ is crucial and should be tailored to the specific requirements of the application. 

\subsubsection{Impact of initial virtual queue size $q^0$}
Lastly, Figure~\ref{fig:q0} illustrates the impact of the initial virtual queue value $q^0$ on both entanglement utility and qubit usage. Our simulation results align with the theoretical analysis, demonstrating that a larger $q^0$ leads to reduced qubit usage by limiting resource allocation during the initial time slots. However, setting $q^0$ too large can have a negative effect on entanglement performance. Our findings indicate that a relatively small $q^0$, in contrast to the common practice of setting $q^0 = 0$, proves to be effective in reducing qubit resource usage while maintaining a nearly stable entanglement utility.

\section{Conclusion}\label{s6}
In this paper, we investigated a user-centric entanglement routing problem in QDNs, a critical facilitator for distributed quantum computing. Our primary focus was on the cost aspect of utilizing the QDN to establish quantum entanglement links, emphasizing the significance of considering the user's long-term budget constraint while making entanglement routing decisions.
To address this challenge, we devised a novel adaptive entanglement routing algorithm, which enables the efficient discovery of quantum routes between source and destination quantum nodes, while also allocating appropriate qubit resources along these routes in an online fashion. Importantly, our algorithm achieves this without relying on future entanglement connection request statistics or network resource dynamics. Theoretical analysis and extensive simulation studies were conducted. The results demonstrated the superiority of our approach compared to baselines that adopt a myopic entanglement routing strategy for current requests only. Our approach offers practical benefits in terms of cost optimization and resource allocation, paving the way for more efficient and scalable quantum communication and computation across distributed quantum systems.

\section*{Acknowledgment}
To improve readability and quality of language, this paper has been grammatically revised using ChatGPT.

\bibliographystyle{IEEEtran}
\bibliography{main}

\begin{thebibliography}{10}
\providecommand{\url}[1]{#1}
\csname url@samestyle\endcsname
\providecommand{\newblock}{\relax}
\providecommand{\bibinfo}[2]{#2}
\providecommand{\BIBentrySTDinterwordspacing}{\spaceskip=0pt\relax}
\providecommand{\BIBentryALTinterwordstretchfactor}{4}
\providecommand{\BIBentryALTinterwordspacing}{\spaceskip=\fontdimen2\font plus
\BIBentryALTinterwordstretchfactor\fontdimen3\font minus
  \fontdimen4\font\relax}
\providecommand{\BIBforeignlanguage}[2]{{%
\expandafter\ifx\csname l@#1\endcsname\relax
\typeout{** WARNING: IEEEtran.bst: No hyphenation pattern has been}%
\typeout{** loaded for the language `#1'. Using the pattern for}%
\typeout{** the default language instead.}%
\else
\language=\csname l@#1\endcsname
\fi
#2}}
\providecommand{\BIBdecl}{\relax}
\BIBdecl

\bibitem{steane1998quantum}
A.~Steane, ``Quantum computing,'' \emph{Reports on Progress in Physics},
  vol.~61, no.~2, p. 117, 1998.

\bibitem{cacciapuoti2019quantum}
A.~S. Cacciapuoti, M.~Caleffi, F.~Tafuri, F.~S. Cataliotti, S.~Gherardini, and
  G.~Bianchi, ``Quantum internet: Networking challenges in distributed quantum
  computing,'' \emph{IEEE Network}, vol.~34, no.~1, pp. 137--143, 2019.

\bibitem{cirac1999distributed}
J.~I. Cirac, A.~Ekert, S.~F. Huelga, and C.~Macchiavello, ``Distributed quantum
  computation over noisy channels,'' \emph{Physical Review A}, vol.~59, no.~6,
  p. 4249, 1999.

\bibitem{buhrman2003distributed}
H.~Buhrman and H.~R{\"o}hrig, ``Distributed quantum computing,'' in
  \emph{International Symposium on Mathematical Foundations of Computer
  Science}.\hskip 1em plus 0.5em minus 0.4em\relax Springer, 2003, pp. 1--20.

\bibitem{mao2023qubit}
Y.~Mao, Y.~Liu, and Y.~Yang, ``Qubit allocation for distributed quantum
  computing,'' in \emph{IEEE INFOCOM 2023-IEEE Conference on Computer
  Communications}.\hskip 1em plus 0.5em minus 0.4em\relax IEEE, 2023, pp.
  1--10.

\bibitem{yang2023asynchronous}
L.~Yang, Y.~Zhao, L.~Huang, and C.~Qiao, ``Asynchronous entanglement
  provisioning and routing for distributed quantum computing,'' in \emph{IEEE
  INFOCOM 2023-IEEE Conference on Computer Communications}.\hskip 1em plus
  0.5em minus 0.4em\relax IEEE, 2023.

\bibitem{mehic2020quantum}
M.~Mehic, M.~Niemiec, S.~Rass, J.~Ma, M.~Peev, A.~Aguado, V.~Martin,
  S.~Schauer, A.~Poppe, C.~Pacher \emph{et~al.}, ``Quantum key distribution: a
  networking perspective,'' \emph{ACM Computing Surveys (CSUR)}, vol.~53,
  no.~5, pp. 1--41, 2020.

\bibitem{bouwmeester1997experimental}
D.~Bouwmeester, J.-W. Pan, K.~Mattle, M.~Eibl, H.~Weinfurter, and A.~Zeilinger,
  ``Experimental quantum teleportation,'' \emph{Nature}, vol. 390, no. 6660,
  pp. 575--579, 1997.

\bibitem{wootters1982single}
W.~K. Wootters and W.~H. Zurek, ``A single quantum cannot be cloned,''
  \emph{Nature}, vol. 299, no. 5886, pp. 802--803, 1982.

\bibitem{vedral2014quantum}
V.~Vedral, ``Quantum entanglement,'' \emph{Nature Physics}, vol.~10, no.~4, pp.
  256--258, 2014.

\bibitem{stephenson2020high}
L.~Stephenson, D.~Nadlinger, B.~Nichol, S.~An, P.~Drmota, T.~Ballance,
  K.~Thirumalai, J.~Goodwin, D.~Lucas, and C.~Ballance, ``High-rate,
  high-fidelity entanglement of qubits across an elementary quantum network,''
  \emph{Physical review letters}, vol. 124, no.~11, p. 110501, 2020.

\bibitem{jennewein2001experimental}
T.~Jennewein, G.~Weihs, J.-W. Pan, and A.~Zeilinger, ``Experimental nonlocality
  proof of quantum teleportation and entanglement swapping,'' \emph{Physical
  review letters}, vol.~88, no.~1, p. 017903, 2001.

\bibitem{hilaire2021error}
P.~Hilaire, E.~Barnes, S.~E. Economou, and F.~Grosshans, ``Error-correcting
  entanglement swapping using a practical logical photon encoding,''
  \emph{Physical Review A}, vol. 104, no.~5, p. 052623, 2021.

\bibitem{schoute2016shortcuts}
E.~Schoute, L.~Mancinska, T.~Islam, I.~Kerenidis, and S.~Wehner, ``Shortcuts to
  quantum network routing,'' \emph{arXiv preprint arXiv:1610.05238}, 2016.

\bibitem{pant2019routing}
M.~Pant, H.~Krovi, D.~Towsley, L.~Tassiulas, L.~Jiang, P.~Basu, D.~Englund, and
  S.~Guha, ``Routing entanglement in the quantum internet,'' \emph{npj Quantum
  Information}, vol.~5, no.~1, p.~25, 2019.

\bibitem{chakraborty2019distributed}
K.~Chakraborty, F.~Rozpedek, A.~Dahlberg, and S.~Wehner, ``Distributed routing
  in a quantum internet,'' \emph{arXiv preprint arXiv:1907.11630}, 2019.

\bibitem{vardoyan2019stochastic}
G.~Vardoyan, S.~Guha, P.~Nain, and D.~Towsley, ``On the stochastic analysis of
  a quantum entanglement switch,'' \emph{ACM SIGMETRICS Performance Evaluation
  Review}, vol.~47, no.~2, pp. 27--29, 2019.

\bibitem{shi2020concurrent}
S.~Shi and C.~Qian, ``Concurrent entanglement routing for quantum networks:
  Model and designs,'' in \emph{Proceedings of the Annual conference of the ACM
  Special Interest Group on Data Communication on the applications,
  technologies, architectures, and protocols for computer communication}, 2020,
  pp. 62--75.

\bibitem{zeng2022multi}
Y.~Zeng, J.~Zhang, J.~Liu, Z.~Liu, and Y.~Yang, ``Multi-entanglement routing
  design over quantum networks,'' in \emph{IEEE INFOCOM 2022-IEEE Conference on
  Computer Communications}.\hskip 1em plus 0.5em minus 0.4em\relax IEEE, 2022,
  pp. 510--519.

\bibitem{zhao2021redundant}
Y.~Zhao and C.~Qiao, ``Redundant entanglement provisioning and selection for
  throughput maximization in quantum networks,'' in \emph{IEEE INFOCOM
  2021-IEEE Conference on Computer Communications}.\hskip 1em plus 0.5em minus
  0.4em\relax IEEE, 2021, pp. 1--10.

\bibitem{zhao2022segmented}
G.~Zhao, J.~Wang, Y.~Zhao, H.~Xu, and C.~Qiao, ``Segmented entanglement
  establishment for throughput maximization in quantum networks,'' in
  \emph{2022 IEEE 42nd International Conference on Distributed Computing
  Systems (ICDCS)}.\hskip 1em plus 0.5em minus 0.4em\relax IEEE, 2022, pp.
  45--55.

\bibitem{zhao2022e2e}
Y.~Zhao, G.~Zhao, and C.~Qiao, ``E2e fidelity aware routing and purification
  for throughput maximization in quantum networks,'' in \emph{IEEE INFOCOM
  2022-IEEE Conference on Computer Communications}.\hskip 1em plus 0.5em minus
  0.4em\relax IEEE, 2022, pp. 480--489.

\bibitem{caleffi2017optimal}
M.~Caleffi, ``Optimal routing for quantum networks,'' \emph{Ieee Access},
  vol.~5, pp. 22\,299--22\,312, 2017.

\bibitem{li2022fidelity}
J.~Li, M.~Wang, K.~Xue, R.~Li, N.~Yu, Q.~Sun, and J.~Lu, ``Fidelity-guaranteed
  entanglement routing in quantum networks,'' \emph{IEEE Transactions on
  Communications}, vol.~70, no.~10, pp. 6748--6763, 2022.

\bibitem{yang2022online}
L.~Yang, Y.~Zhao, H.~Xu, and C.~Qiao, ``Online entanglement routing in quantum
  networks,'' in \emph{2022 IEEE/ACM 30th International Symposium on Quality of
  Service (IWQoS)}.\hskip 1em plus 0.5em minus 0.4em\relax IEEE, 2022, pp.
  1--10.

\bibitem{farahbakhsh2022opportunistic}
A.~Farahbakhsh and C.~Feng, ``Opportunistic routing in quantum networks,'' in
  \emph{IEEE INFOCOM 2022-IEEE Conference on Computer Communications}.\hskip
  1em plus 0.5em minus 0.4em\relax IEEE, 2022, pp. 490--499.

\bibitem{van2014quantum}
R.~Van~Meter, \emph{Quantum networking}.\hskip 1em plus 0.5em minus 0.4em\relax
  John Wiley \& Sons, 2014.

\bibitem{mattle1996dense}
K.~Mattle, H.~Weinfurter, P.~G. Kwiat, and A.~Zeilinger, ``Dense coding in
  experimental quantum communication,'' \emph{Physical review letters},
  vol.~76, no.~25, p. 4656, 1996.

\bibitem{nielsen2001quantum}
M.~A. Nielsen and I.~L. Chuang, ``Quantum computation and quantum
  information,'' \emph{Phys. Today}, vol.~54, no.~2, p.~60, 2001.

\bibitem{pirandola2015advances}
S.~Pirandola, J.~Eisert, C.~Weedbrook, A.~Furusawa, and S.~L. Braunstein,
  ``Advances in quantum teleportation,'' \emph{Nature photonics}, vol.~9,
  no.~10, pp. 641--652, 2015.

\bibitem{dahlberg2019link}
A.~Dahlberg, M.~Skrzypczyk, T.~Coopmans, L.~Wubben, F.~Rozpedek, M.~Pompili,
  A.~Stolk, P.~Pawelczak, R.~Knegjens, J.~de~Oliveira~Filho \emph{et~al.}, ``A
  link layer protocol for quantum networks,'' in \emph{Proceedings of the ACM
  special interest group on data communication}, 2019, pp. 159--173.

\bibitem{dijkstra2022note}
E.~W. Dijkstra, ``A note on two problems in connexion with graphs,'' in
  \emph{Edsger Wybe Dijkstra: His Life, Work, and Legacy}, 2022, pp. 287--290.

\bibitem{kelly1997charging}
F.~Kelly, ``Charging and rate control for elastic traffic,'' \emph{European
  transactions on Telecommunications}, vol.~8, no.~1, pp. 33--37, 1997.

\bibitem{boyd2004convex}
S.~P. Boyd and L.~Vandenberghe, \emph{Convex optimization}.\hskip 1em plus
  0.5em minus 0.4em\relax Cambridge university press, 2004.

\bibitem{geman1984stochastic}
S.~Geman and D.~Geman, ``Stochastic relaxation, gibbs distributions, and the
  bayesian restoration of images,'' \emph{IEEE Transactions on pattern analysis
  and machine intelligence}, no.~6, pp. 721--741, 1984.

\bibitem{neely2022stochastic}
M.~Neely, \emph{Stochastic network optimization with application to
  communication and queueing systems}.\hskip 1em plus 0.5em minus 0.4em\relax
  Springer Nature, 2022.

\bibitem{waxman1988routing}
B.~M. Waxman, ``Routing of multipoint connections,'' \emph{IEEE journal on
  selected areas in communications}, vol.~6, no.~9, pp. 1617--1622, 1988.

\end{thebibliography}

\end{document}